\newtheorem{dfn}{Definition}
\newtheorem{thm}{Theorem}
\newtheorem{lem}{Lemma}
\newcommand{\improvement}{improvement\xspace} 
\newcommand{\equitability}{shortfall\xspace} 
\newcommand{\gainovermin}{gain\xspace}
\newcommand{\regret}{regret\xspace}
\newcommand{\Improvement}{Improvement\xspace} 
\newcommand{\Equitability}{Shortfall\xspace} 
\newcommand{\Gainovermin}{Gain\xspace}
\newcommand{\Regret}{Regret\xspace}
\title{Trade-offs between Group Fairness Metrics in Societal Resource Allocation}
\author[1]{Tasfia Mashiat}
\author[1]{Xavier Gitiaux}
\author[1]{Huzefa Rangwala}
\author[2]{Patrick Fowler}
\author[1]{Sanmay Das}
\affil[1]{George Mason University. Email:\{tmashiat, xgitiaux, rangwala, sanmay\}@gmu.edu}
\affil[2]{Washington University in Saint Louis. Email: pjfowler@wustl.edu}
\begin{document}

\maketitle

\begin{abstract}
We consider social resource allocations that deliver an array of scarce supports to a diverse population. Such allocations pervade social service delivery, such as provision of homeless services, assignment of refugees to cities, among others. At issue is whether allocations are fair across sociodemographic groups and intersectional identities.
Our paper shows that necessary trade-offs exist for fairness in the context of scarcity; many reasonable definitions of equitable outcomes cannot hold simultaneously except under stringent conditions. For example, defining fairness in terms of improvement over a baseline inherently conflicts with defining fairness in terms of loss compared with the best possible outcome.
Moreover, we demonstrate that the fairness trade-offs stem from heterogeneity across groups in intervention responses. 
Administrative records on homeless service delivery offer a real-world example. Building on prior work, we measure utilities for each household as the probability of reentry into homeless services if given three homeless services. Heterogeneity in utility distributions (conditional on received services) for several sociodemographic groups (e.g. single women with children versus without children) generates divergence across fairness metrics. We argue that such heterogeneity, and thus, fairness trade-offs pervade many social policy contexts. 

\end{abstract}

\keywords{Resource allocation, algorithmic fairness, fairness metrics}

\section{Introduction}\label{sec:intro}

Many social interventions that allocate resources to individuals are challenging because individuals have heterogeneous utilities. 
Thus, the design and analysis of allocation policies for social interventions in terms of efficiency and fairness is critical \cite{roth2015gets}, as seen in many domains including child protection (e.g. \cite{chouldechova2018case}), healthcare (e.g. \cite{yadav2016using}), and homeless services (e.g \cite{kube2019fair,brown2018reliability}). 
A particular concern for the use of machine learning posits that the tools systematically disfavor some sociodemographic or intersectional groups (see \cite{chouldechova2018frontiers} for a review). For example, a growing body of work has documented racial disparities in credit lending, recidivism risk assessment \cite{ProPublica2016}, education \cite{gardner2019evaluating}, healthcare \cite{pfohl2019creating}, and policing \cite{ensign2018runaway}. In this paper, we explore how to measure these potential disparities in the context of allocating resources given a limited budget. The literature on fair resource allocation has typically come from the areas of fair division and cooperative game theory. In that literature, one typically thinks of individuals as having preferences, and tries to define measures of fairness and  allocation mechanisms that demonstrate these properties with respect to individual preferences. Recent notions of group fairness coming from the fair division line of literature strengthen the requirements for individual fairness \cite{conitzer2019group} and are thus too strong for situations of scarce resource allocation, where allocations by definition must be unfavorable to some individuals.

So how should one measure fairness across groups in the allocation of scarce societal resources, where decisions often are made on the basis of multiple criteria? To ground our considerations in a specific case, consider homelessness service provision, where federal policy makes serving the most vulnerable an explicit goal, and at the same time, the effectiveness of services is measured by returns to homelessness among those served \cite{systemperformance}. Such examples motivate us to consider how different notions of what role social services should play lead to different conclusions about the fairness of potential allocations across demographic groups. 

For example, we could analyze how much better off members of a group are compared with how well they would have done under some minimal baseline allocation, or we could look at how much worse-off members of a group are than they would have been under the allocations that serve them the best.
Fairness could then be defined as equitable performance of groups according to these measures, and indeed, the existing literature on fair allocation of both divisible and indivisible resources has looked at measures along both of directions, of \improvement (or \gainovermin) and \regret (or \equitability). 
Although both are reasonable definitions of a fair allocation, we consider two important factors that arise in many real-world problems. First, instead of the problem simply focusing on a set of identical resources that need to be allocated amongst agents, there is often a whole set of different interventions, each with capacity constraints (for example, different types of homelessness resources or different cities that refugees can be matched to). Second, individuals may respond heterogeneously to the different interventions (for example, homeless individuals with disabilities may benefit disproportionately from intensive housing supports, or refugees may assimilate and find jobs more easily in places where there is already a substantial population from their place of origin). 

We show that when there is a multiplicity of possible services
, and groups are heterogeneous in the distributions of utilities they receive from different services, it becomes impossible to satisfy simultaneously \improvement and \regret oriented definitions of group fairness. Even more dramatically, 
an allocation policy that appears to favor one group according to \improvement fairness can favor another group according to \regret fairness. The results yield insights on inherent trade-offs that policymakers face when attempting to achieve a fairness objective. How we measure improvement or regret also matters when assessing the fairness of an allocation policy. For example, we could measure improvement by the ratio of realized utility over baseline utility (a multiplicative measure); or by the difference between realized utility and baseline utility (an additive measure). Depending on the application, it is not always clear which of these additive or multiplicative normalizations makes more sense. We establish, in a stylized framework, that fairness in terms of additive normalization and fairness in terms of multiplicative normalization cannot hold simultaneously except when the distribution of individual responses to different allocations is similar across demographic groups. 

These trade-offs are not theoretical corner-cases and have substantive implications for social policy. We use administrative data from a regional homeless system to explore the fairness of a capacitated assignment of community-based services that address housing needs. Services include transitional housing, rapid rehousing, and emergency shelter; three programs that vary in intensity and availability. We measure the utility of a service to a household as the probability estimated in prior work by \cite{kube2019fair} that the household would make a successful exit from homelessness given the delivery of that service. We first document significant differences in utility distributions across different groups (e.g., disabled versus not disabled households, families with children versus households without children, single females with versus without children). We then confirm our theoretical results that the differences in utility distributions across groups generate trade-offs when assessing the fairness of an allocation. For example, we consider the original allocation as recorded in the administrative data and we find that improvement and regret disagree on whether the policy favors households with or without children, as well as other groups. 

In addition to contributing to our understanding of how the definition and measurement of fairness is affected by heterogeneity in how members of different groups may respond to interventions, these findings can inform practice in homeless and social services that allocate scarce resources across diverse populations. Policies frequently attempt to maximize public welfare by targeting available supports towards heterogeneous groups based on competing notions of fairness (e.g., vulnerability, efficiency, equality). Understanding the fairness trade-offs and measurement sensitivity allows for more intentional policy-making and better evaluation.

\section{Related Work}\label{sec:literature}

\subsection{Group Fairness}
Prior research has led to many definitions of fairness to compare algorithmic outcomes across demographic groups. Popular definitions include statistical parity~\cite{dwork2012fairness}, equalized odds and opportunity~\cite{hardt2016equality}. However, these definitions only apply to binary settings and implicitly assume that the utility of an individual is equal to one when the algorithm's outcome is one and equal to zero otherwise. Few papers consider more general definitions of utilities~\cite{heidari2019moral}. In this paper, we argue as in~\cite{hossain2020designing} that in many societal applications of machine learning, utilities are heterogeneous across individuals and that this heterogeneity could be systematic across demographic groups. 

The fair division literature offers a framework to compare utilities across individuals. Envy-freeness, proportionality or equitability~\cite{caragiannis2019unreasonable} are common utility-based definitions of a fair allocation of goods. The literature strengthens these notions of fairness to control for envy-freeness to arbitrary segments of the population~\cite{conitzer2019group, bartholdi1992hard}. In this paper, we focus on notions of group equitability that vary by their normalization, but leaves it for future research to explore the role of normalization on group envy-freeness. 

A standard assumption in the fair division literature is that utilities, although heterogeneous, are unit-normalized~\cite{aziz2020justifications}. The rationale for unit-normalization is that it allows one to make more reasonable interpersonal comparisons of utility by converting all utilities to a common scale. Unit-normalization implies that the maximum utility gain is equal to one for all individuals~\cite{aziz2020justifications}. Our notions of \equitability or \regret rely on a similar assumption, which is reasonable in many settings (e.g. voting ~\cite{bouveret2016characterizing}). However, we argue that other reasonable choices of normalization are possible and more relevant in different types of allocation problems. For example, in the case of homeless services delivery, a policymaker would want to account for the fact that families with children have on average more to gain from rapid rehousing programs \cite{rog2007characteristics}. In this case, our measures of \improvement and \gainovermin, which normalize by comparison with the worst utility that an individual can expect from an allocation, are also reasonable notions of fairness.  This paper relates closely to the work of \cite{hossain2020designing}, who introduce utility-based notions of group fairness for classification problems. However, they assume away the need to normalize utilities to a similar scale / support. One of our contributions is to show that different normalization approaches can lead to conflicting assessments of the fairness of an allocation policy.

\subsection{Impossibility Results}
The binary outcome setting admits some fundamental impossibility results  \cite{kleinberg2016inherent,chouldechova2017fair}. Except under very restrictive conditions, it is impossible for a classifier to simultaneously equalize false positive rates and false negative rates across groups and also guarantee that predictions are calibrated within each group. \cite{kleinberg2016inherent} show that the impossibility emerges whenever demographic groups differ systematically in the distribution of features used by the classifier as inputs. In this paper, we demonstrate new impossibility results in the case of utility-based notions of fairness. As in \cite{kleinberg2016inherent}, we obtain a paradox where fairness guarantees that seem to share the same objective -- that the allocation of resources will be as effective for all demographic groups -- are nonetheless incompatible.  

Our results on the incompatibility of different fairness principles is also reminiscent of Arrow's impossibility theorem ~\cite{arrow1950difficulty}. In the presence of heterogeneous preferences, there is no way to aggregate individual preferences into a social welfare function that would satisfy unanimity, non-dictatorship and informational parsimony. The theory of fair allocation \cite{foley1966resource,varian1973equity} that selects a subset of policies on basis of their fairness and efficiency obtains possibility results by relaxing informational parsimony \cite{fleurbaey2005informational}. However, in this paper, we show that we cannot avoid negative results when notions of fairness based on different normalizations have to hold simultaneously.

\subsection{Algorithmic Allocation of Societal Resources}
There has been 
recent interest in the specific setting where scarce resources that are collective or societal are algorithmically allocated by a centralized institution to individual members of society (see \cite{das2022local} for a recent review). 
The design of algorithmic approaches has typically focused on increasing the efficiency  of social interventions, including kidney exchange \cite{li2019incorporating,roth2005pairwise}, housing assistance \cite{kube2019fair,manlove2006popular}, HIV awareness campaigns \cite{yadav2016using} and refugee resettlement \cite{delacretaz2019matching}. In this paper, we investigate how to assess the fairness of resulting allocations. Empirically, we find evidence of our impossibility results in the context of capacitated one-sided matching, which involve a set of services with fixed capacities, a set of agents with heterogeneous preference orderings (see e.g. ~\cite{manlove2006popular} for an application to the house allocation problem) and a social worker that assigns a service to each agent.

\section{Inherent Fairness Trade-Offs in Resource Allocation}
\label{sec: method}

In this section we describe our theoretical framework, first defining the problems we are concerned with, and then outlining both general and illustrative results on inherent group fairness trade-offs in the allocation of scarce resources.

\subsection{Setting}
We consider $K$ services, with maximum capacities $c_{k}$ for $k\in \{1, ..., K\}$, and $N$ individuals $i=1, ..., N$.\footnote{We follow the convention of denoting vectors in bold type and random variables with capital letters.} We can thus describe individuals by their utility vector $\mathbf{u}=(u_{1}, ..., u_{K})$ over each program $k$ and their sensitive attribute $s\in\mathcal{S}$. $\mathcal{S}$ describes the set of groups for which we want to study the fairness of service allocation. For ease of exposition, we assume that group characteristics are binary and $\mathcal{S}=\{0, 1\}$; however, our results readily extend to more complex definitions of groups, and the empirical section will show that our results hold for intersectional groups. We denote by $N_{s}$ the number of individuals with sensitive attribute $s=0, 1$.

For each individual $\mathbf{u}$, we denote by $u^{\min}$ the utility derived from receiving the least beneficial program: $u^{\min}=\min\{u_{k}|k=1, .., K\}$. We denote by $u^{\max}$ the utility derived from receiving the most beneficial program: $u^{\max}=\max\{u_{k}|k=1, .., K\}$. Best and worst programs might vary among individuals. $u^{\min}$ could potentially characterize a ``do nothing option'', i.e. the individuals' utility without the intervention. We assume that $\mathbf{u}$ is drawn from a joint distribution $G_{s}(u)$ over $\mathbb{R}^{K}$ that depends on the value $s$ of the sensitive attribute. We denote the random utility vector $U$.

An allocation policy $\mathbf{a}:\mathbb{R}^{K}\rightarrow \{0, 1\}^{K}$ assigns each individual with utility $\mathbf{u}$ to a program $k$ if and only if  $a_{k}(\mathbf{u})=1$. We assume that individuals are assigned to only one program: $\sum_{k=1}^{K}a_{k}(\mathbf{u})=1$. We denote by $\mathbf{a}.\mathbf{u}$ the inner product between the policy assignment and the individual utility: $\mathbf{a}.\mathbf{u} = \sum_{k=1}^{K}a_{k}(\mathbf{u})u_{k}$. Given $N$ individuals $i$ with utility $\mathbf{u}_{i}$, the allocation is feasible if and only if for all programs $k$, $\sum_{i=1}^{N}a_{k}(\mathbf{u}_{i})\leq c_{k}$ (the maximum capacity for the $k$-th service). 

\subsection{Fairness, Baselines, and Normalization}
In this section, we consider four notions of fairness to compare the average realized utility between groups: \improvement, \regret, \equitability, and \gainovermin. The definitions differ along two dimensions (1) how they normalize individual utility (additive or multiplicative), and (2) which baselines they compare individual realized utility to (worst case or best case).

The \improvement and \gainovermin metrics use as a baseline the minimal or worst utility that an individual can expect from any service they receive. To be fair, the definitions say that the average increase in utility relative to the least beneficial intervention should be equal across groups. They differ in how they normalize realized utility relative to the baseline; \improvement uses an additive normalization, while \gainovermin uses a multiplicative normalization.

\begin{dfn}
\textbf{\Improvement fairness.}
An allocation policy $\mathbf{a}$ satisfies fair \improvement if and only if
\begin{equation}
    E\left[\frac{1}{N_{0}}\displaystyle\sum_{i, s=0}\mathbf{a}.(\mathbf{u}_{i} - u^{\min}_{i})\right]= E\left[\frac{1}{N_{1}}\displaystyle\sum_{i, s=1}\mathbf{a}.(\mathbf{u}_{i} - u^{\min}_{i})\right],
\end{equation}
where the expectation is taken over samples of size $N_{s}$ for the group with sensitive attribute $s=0,1$.
\end{dfn}

\begin{dfn}
\textbf{\Gainovermin fairness.}
An allocation policy $a$ satisfies fair \gainovermin if and only if
\begin{equation}
    E\left[\frac{1}{N_{0}}\displaystyle\sum_{i, s=0}\mathbf{a}.\frac{\mathbf{u}_{i}}{ u^{\min}_{i}}\right]= E\left[\frac{1}{N_{1}}\displaystyle\sum_{i, s=1}\mathbf{a}.\frac{\mathbf{u}_{i}}{ u^{\min}_{i}}\right].
\end{equation}
\end{dfn}

We denote by $\Delta I(\mathbf{a})$ the difference in \improvement between groups:
\begin{equation}
    \Delta I(\mathbf{a}) =  E\left[\frac{1}{N_{1}}\displaystyle\sum_{i, s=1}\mathbf{a}.(\mathbf{u}_{i} - u^{\min}_{i})\right] - E\left[\frac{1}{N_{0}}\displaystyle\sum_{i, s=0}\mathbf{a}.(\mathbf{u}_{i} - u^{\min}_{i})\right].
\end{equation}
If $\Delta I(\mathbf{a})$ is positive, the policy $\mathbf{a}$ favors group $1$; if $\Delta I(\mathbf{a})$ is negative, the policy favors group $0$. We define similarly differences in \gainovermin as $\Delta G(\mathbf{a})$.

\Regret fairness and \equitability benchmark the realized utility in comparison to the best outcome individuals can hope for from any service (as such they are related to the classical definition of \emph{equitability} in fair division, albeit with differences in normalization). Both fairness notions are satisfied when the average loss of utility compared to receiving the most beneficial program is equalized across groups. 

\begin{dfn}
\textbf{\Regret fairness.}
An allocation policy $a$ satisfies \regret fairness if and only if
\begin{equation}
    E\left[\frac{1}{N_{0}}\displaystyle\sum_{i: s=0}\mathbf{a}.(u^{\max}_{i}- \mathbf{u}_{i})\right]= E\left[\frac{1}{N_{1}}\displaystyle\sum_{i: s=1}\mathbf{a}.(u^{\max}_{i}- \mathbf{u}_{i})\right],
\end{equation}
\end{dfn}

\begin{dfn}
\textbf{\Equitability.}
An allocation policy $a$ satisfies \equitability if and only if
\begin{equation}
    E\left[\frac{1}{N_{0}}\displaystyle\sum_{i: s=0}\mathbf{a}.\frac{\mathbf{u}_{i}}{ u^{\max}_{i}}\right]= E\left[\frac{1}{N_{1}}\displaystyle\sum_{i: s=1}\mathbf{a}.\frac{\mathbf{u}_{i}}{ u^{\max}_{i}}\right].
\end{equation}
\end{dfn}

Like differences in \improvement or in \gainovermin, we denote differences in \equitability and \regret as $\Delta S(\mathbf{a})$ and $\Delta R(\mathbf{a})$, respectively. Note that $\Delta R(\mathbf{a})\geq 0$ means that the policy $\mathbf{a}$ favors group $S=0$ over group $S=1$ for regret fairness.

All four definitions represent reasonable and desirable properties of a fair allocation. However, the following results show that a decision-maker faces trade-offs when choosing which fairness notion to target. Not only might the notions not be satisfied simultaneously, it is possible to generate explicitly contradictory conclusions across the relatively similar fairness metrics regarding which group is under-served. 

\subsection{\Improvement versus \Regret}\label{subsec:imp_vs_regret}

Our first result shows that \improvement and \regret fairness cannot be satisfied simultaneously, unless we impose strong restrictions on how groups differ. Consider two random variables $U^{\max}$ and $U^{\min}$ defined on individual most and least beneficial utility. The maximum individual utility gain that can be delivered by a service is then a random variable $\Delta U = U^{\max}-U^{\min}$. We show that heterogeneity in $\Delta U$ across groups generates an inherent trade-off between improvement and regret fairness. 

\begin{thm}
\label{thm: trade-off}
If an allocation policy $\mathbf{a}$ satisfies both \improvement and \regret fairness then the average maximum utility gain $\Delta U$ must be equal across groups: $E[\Delta U|S=0] = E[\Delta U|S=1]$. Moreover, $\Delta I(\mathbf{a}) + \Delta R(\mathbf{a})=E[\Delta U|S=1] - E[\Delta U|S=0]$. 
\end{thm}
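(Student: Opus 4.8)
The plan is to exploit a pointwise cancellation that follows directly from the fact that each individual is assigned to exactly one program. Fix an individual $i$. Because $\sum_{k}a_{k}(\mathbf{u}_i)=1$, the scalar baselines broadcast across coordinates give $\mathbf{a}.u^{\min}_i = u^{\min}_i$ and $\mathbf{a}.u^{\max}_i = u^{\max}_i$, so the \improvement term $\mathbf{a}.(\mathbf{u}_i - u^{\min}_i) = \mathbf{a}.\mathbf{u}_i - u^{\min}_i$ and the \regret term $\mathbf{a}.(u^{\max}_i - \mathbf{u}_i) = u^{\max}_i - \mathbf{a}.\mathbf{u}_i$ both carry the realized utility $\mathbf{a}.\mathbf{u}_i$ with opposite signs. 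Adding them, the realized-utility terms cancel and
\begin{equation}
\mathbf{a}.(\mathbf{u}_i - u^{\min}_i) + \mathbf{a}.(u^{\max}_i - \mathbf{u}_i) = u^{\max}_i - u^{\min}_i = \Delta U_i.
\end{equation}
This identity, which holds for every individual regardless of which program the policy selects, is the engine of the whole argument.

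Next I would average this identity over the individuals in each group and take the expectation over samples. Writing $I_s$ and $R_s$ for the group-$s$ average \improvement and average \regret that appear in the two definitions, linearity of expectation yields $I_s + R_s = E[\Delta U | S=s]$ for $s=0,1$. The key observation here is that $\Delta U_i$ is a function of $\mathbf{u}_i$ alone and does not depend on the allocation $\mathbf{a}$, so no feasibility or capacity considerations enter and each averaged term collapses to the conditional population mean of $\Delta U$ under $G_s$.

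Finally I would subtract the $s=0$ identity from the $s=1$ identity. By definition $\Delta I(\mathbf{a}) = I_1 - I_0$ and $\Delta R(\mathbf{a}) = R_1 - R_0$, hence
\begin{equation}
\Delta I(\mathbf{a}) + \Delta R(\mathbf{a}) = (I_1 + R_1) - (I_0 + R_0) = E[\Delta U | S=1] - E[\Delta U | S=0],
\end{equation}
which is the claimed identity. The first assertion then follows at once: if $\mathbf{a}$ satisfies both \improvement and \regret fairness, then $\Delta I(\mathbf{a}) = 0$ and $\Delta R(\mathbf{a}) = 0$, so the right-hand side vanishes and $E[\Delta U | S=0] = E[\Delta U | S=1]$.

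I do not anticipate a substantive obstacle; the mathematical content is the one-line cancellation enabled by the single-assignment constraint $\sum_k a_k = 1$. The only point requiring a little care is the passage from the sample averages inside the expectations to the conditional means $E[\Delta U | S=s]$: this step relies on the group-$s$ individuals being i.i.d.\ draws from $G_s$ together with the fact that $\Delta U_i$ depends only on $\mathbf{u}_i$, which is precisely what lets us pull the expectation through the sum even though the allocation itself may depend on the entire realized sample through the capacity constraints.
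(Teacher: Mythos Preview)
Your proposal is correct and follows essentially the same approach as the paper: both arguments hinge on the single-assignment constraint $\sum_k a_k(\mathbf{u})=1$ to obtain the cancellation that makes improvement plus regret collapse to $\Delta U$, and then take the group-wise difference. The only cosmetic difference is that you establish the identity pointwise for each individual before averaging, whereas the paper manipulates the aggregate $\Delta I(\mathbf{a})$ directly by adding and subtracting $u^{\max}_i$ inside the sum; the underlying computation is the same.
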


\begin{proof}
The proof is based on the following identities:
\begin{equation}
    \begin{split}
        \Delta I(\mathbf{a})& = E\left[\frac{1}{N_{1}}\displaystyle\sum_{i: s=1}\mathbf{a}(\mathbf{u}).(\mathbf{u}_{i}-u^{\max}_{i} + \Delta u_{i})\right] - E\left[\frac{1}{N_{0}}\displaystyle\sum_{i: s=0}\mathbf{a}(\mathbf{u}).(\mathbf{u}_{i}-u^{\max}_{i} + \Delta u_{i})\right]  \\
        & =E\left[\frac{1}{N_{1}}\displaystyle\sum_{i: s=1}\sum_{k=1}^{K}\mathbf{a}_{k}(\mathbf{u})\Delta u_{i}\right] - E\left[\frac{1}{N_{0}}\displaystyle\sum_{i: s=0}\sum_{k=1}^{K}\mathbf{a}_{k}(\mathbf{u})\Delta u_{i}\right] - \Delta R(\mathbf{a})\\
        & =E\left[\frac{1}{N_{1}}\displaystyle\sum_{i: s=1}\Delta u_{i}\right] - E\left[\frac{1}{N_{0}}\displaystyle\sum_{i: s=0}\Delta u_{i}\right] - \Delta R(\mathbf{a}),
    \end{split}
\end{equation}
where the last equality comes from the fact that $\sum_{k=1}^{K}a_{k}(u)=1$ for all $u$. Therefore, if $\Delta I(\mathbf{a})=0$ and $\Delta R(\mathbf{a})=0$, then $E[\Delta U|S=0] = E[\Delta U|S=1]$.
\end{proof}

The result in Theorem \ref{thm: trade-off} implies that regardless of the allocation policy, for both \improvement and \regret fairness to hold it is necessary that groups would gain on average similarly if they were always allocated their most beneficial intervention. Thus, a trade-off exists when defining what a fair assignment should look like: for example, a policy satisfying improvement fairness would always violate regret fairness unless $E[\Delta U|S=0] = E[\Delta U|S=1]$. Since $\Delta I(\mathbf{a}) + \Delta R(\mathbf{a})=E[\Delta U|S=1] - E[\Delta U|S=0]$, the closer a policy is to satisfying improvement fairness, the worse its regret fairness, and vice-versa. 
A follow up question is whether \improvement and \regret fairness tell a different story about the fairness of an allocation policy $a$. The next result shows that whenever $E[\Delta U|S=0]$ and  $E[\Delta U|S=1]$ differ, unless all policies favor one group, there exists a policy that favors one group for \improvement fairness and favors the other one for \regret fairness. 

\begin{thm}
\label{cor: 2}
Suppose that $E[\Delta U|S=1] > E[\Delta U | S=0]$. Suppose that there exists a policy that favors group $S=0$ for \improvement fairness and another policy that favors group $S=1$ for \improvement fairness. Then, there exists a policy $\mathbf{a^{*}}$ such that $\Delta I(\mathbf{a^{*}}) > 0 \mbox{ and } \Delta R(\mathbf{a^{*}}) > 0
$. That is, there exists a policy that favors $S=1$ with respect to \improvement fairness (larger is better), but favors $S=0$ with respect to \regret fairness (lower is better).
\end{thm}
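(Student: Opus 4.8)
The plan is to reduce the statement to the additive identity already established in Theorem~\ref{thm: trade-off} and then produce the desired policy by an intermediate-value argument. Write $D := E[\Delta U|S=1] - E[\Delta U|S=0]$, so that $D > 0$ by hypothesis, and recall that Theorem~\ref{thm: trade-off} gives $\Delta I(\mathbf{a}) + \Delta R(\mathbf{a}) = D$ for \emph{every} feasible policy $\mathbf{a}$. Consequently, exhibiting $\mathbf{a}^{*}$ with the two claimed strict inequalities is equivalent to exhibiting a single feasible policy with $0 < \Delta I(\mathbf{a}^{*}) < D$: the lower bound is exactly $\Delta I(\mathbf{a}^{*}) > 0$, and the upper bound forces $\Delta R(\mathbf{a}^{*}) = D - \Delta I(\mathbf{a}^{*}) > 0$. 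So the whole problem becomes: show that the set of achievable values of $\Delta I$ meets the open interval $(0, D)$.

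Next I would exploit the two policies supplied by the hypothesis. Let $\mathbf{a}_{-}$ favor group $S=0$ for \improvement, so $\Delta I(\mathbf{a}_{-}) < 0$, and let $\mathbf{a}_{+}$ favor group $S=1$, so $\Delta I(\mathbf{a}_{+}) > 0$. The key structural fact is that $\Delta I$ is an \emph{affine} functional of the policy, since it is an expectation of the linear map $\mathbf{a} \mapsto \mathbf{a}.(\mathbf{u}_{i} - u^{\min}_{i})$. I would therefore interpolate: for $\lambda \in [0,1]$ let $\mathbf{a}_{\lambda}$ be the policy that applies $\mathbf{a}_{+}$ with probability $\lambda$ and $\mathbf{a}_{-}$ with probability $1-\lambda$. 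By linearity of expectation,
\[
\Delta I(\mathbf{a}_{\lambda}) = \lambda\,\Delta I(\mathbf{a}_{+}) + (1-\lambda)\,\Delta I(\mathbf{a}_{-}),
\]
which is continuous (indeed linear and strictly increasing) in $\lambda$, with $\Delta I(\mathbf{a}_{0}) < 0$ and $\Delta I(\mathbf{a}_{1}) > 0$. Its image over $[0,1]$ is the interval $[\Delta I(\mathbf{a}_{-}), \Delta I(\mathbf{a}_{+})]$, which contains $(0, \min\{\Delta I(\mathbf{a}_{+}), D\})$; this sub-interval is nonempty because both $\Delta I(\mathbf{a}_{+})$ and $D$ are strictly positive. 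By the intermediate value theorem I can then pick $\lambda^{*}$ with $\Delta I(\mathbf{a}_{\lambda^{*}}) \in (0, D)$ and set $\mathbf{a}^{*} = \mathbf{a}_{\lambda^{*}}$, which finishes the argument.

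The step I expect to be delicate is feasibility of the interpolated policy $\mathbf{a}_{\lambda}$. The cleanest route is to note that the capacity constraints $\sum_{i} a_{k}(\mathbf{u}_{i}) \le c_{k}$ are linear in $\mathbf{a}$ and hence cut out a convex feasible set, so a mixture of two feasible policies is again feasible (in expectation, which is consistent with the expectation-based fairness definitions); under this reading no further work is needed. If instead one insists on a \emph{deterministic}, per-realization feasible policy, then I would replace the probabilistic mixture by a path in policy space that reassigns individuals from their $\mathbf{a}_{-}$-program to their $\mathbf{a}_{+}$-program one cycle at a time. Because every individual receives exactly one program, the symmetric difference of $\mathbf{a}_{-}$ and $\mathbf{a}_{+}$ decomposes into cycles whose reassignment preserves each program's occupancy, so capacities are respected throughout; and because utilities are bounded, each elementary reassignment perturbs $\Delta I$ by a controlled amount, so some intermediate deterministic policy must land inside $(0, D)$. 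Verifying that these increments are small enough to guarantee a hit inside the open interval is the one place that genuinely requires care.
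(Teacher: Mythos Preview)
Your proof is correct and follows essentially the same strategy as the paper: reduce to the identity $\Delta I+\Delta R=D$ from Theorem~\ref{thm: trade-off}, then interpolate between the two given policies to land $\Delta I$ strictly inside $(0,D)$. The only substantive difference is the interpolation mechanism. You mix by flipping a $\lambda$-coin over which of $\mathbf{a}_{+},\mathbf{a}_{-}$ to apply globally, and then worry about whether a deterministic, capacity-respecting variant exists; the paper's Lemma~\ref{lem: 1} instead splits the \emph{population} at random into pieces of sizes $\lambda N$ and $(1-\lambda)N$, allocates each program's capacity in the same proportion, and applies one policy to each piece. That construction is feasible by design (each sub-population sees a scaled-down copy of the original capacitated problem), so it dispatches in one stroke the feasibility issue you flag as ``delicate'' and makes your cycle-decomposition fallback unnecessary. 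Either interpolation is valid; the paper's is slightly more concrete about capacities, while your coin-flip version is the most direct once one accepts randomized policies and expectation-based fairness.
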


The proof of Theorem \ref{cor: 2} relies on the fact that the set of differences in \improvement/\regret is a continuous interval:
\begin{lem}
\label{lem: 1}
Suppose that there exist two allocation policies $\mathbf{a}$ and $\mathbf{a}^{'}$ with differences in \improvement $\delta$ and $\delta^{'}> \delta$. Then, for any $\delta^{*}\in[\delta, \delta^{'}]$, there exists an allocation policy $\mathbf{a}^{*}$ with difference in \improvement equal to $\delta^{*}$. A similar result holds for difference in \regret. \end{lem}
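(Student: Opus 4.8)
The plan is to exploit the fact that the difference in \improvement, $\Delta I(\mathbf{a})$, is an \emph{affine} functional of the allocation policy $\mathbf{a}$: since the inner product $\mathbf{a}.(\mathbf{u}_{i}-u^{\min}_{i})$ is linear in the assignment vector, $\Delta I$ is linear in $\mathbf{a}$, so the set of achievable values of $\Delta I$ is convex, and a convex subset of $\mathbb{R}$ is precisely an interval. Concretely, given the two witnessing policies $\mathbf{a}$ and $\mathbf{a}'$ with $\Delta I(\mathbf{a})=\delta$ and $\Delta I(\mathbf{a}')=\delta'$, I would consider the one-parameter family of mixtures
\begin{equation}
\mathbf{a}^{\lambda} = \lambda\,\mathbf{a}' + (1-\lambda)\,\mathbf{a}, \qquad \lambda \in [0,1].
\end{equation}
By linearity of the inner product and of expectation, this gives immediately
\begin{equation}
\Delta I(\mathbf{a}^{\lambda}) = \lambda\,\delta' + (1-\lambda)\,\delta.
\end{equation}

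First I would verify feasibility of $\mathbf{a}^{\lambda}$: for any realized sample and any program $k$, $\sum_{i} a^{\lambda}_{k}(\mathbf{u}_{i}) = \lambda \sum_{i} a'_{k}(\mathbf{u}_{i}) + (1-\lambda)\sum_{i} a_{k}(\mathbf{u}_{i}) \le \lambda c_{k} + (1-\lambda) c_{k} = c_{k}$, so the capacity constraints are preserved because they too are linear in $\mathbf{a}$. Then, for a target $\delta^{*}\in[\delta,\delta']$, I would simply solve $\lambda^{*} = (\delta^{*}-\delta)/(\delta'-\delta) \in [0,1]$ and read off $\mathbf{a}^{\lambda^{*}}$ as the desired policy; the intermediate-value property is thus automatic from the affine (hence continuous) dependence on $\lambda$, with no fixed-point or compactness machinery required. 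The identical argument applies verbatim to \regret, since $\Delta R(\mathbf{a})$ is likewise linear in $\mathbf{a}$ (each summand $\mathbf{a}.(u^{\max}_{i}-\mathbf{u}_{i})$ is linear in the assignment).

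The hard part will be that the mixture $\mathbf{a}^{\lambda}$ takes values in $[0,1]^{K}$ rather than $\{0,1\}^{K}$, so strictly speaking it is a \emph{fractional} (randomized) allocation rather than a deterministic policy of the type defined in the setting. I see two ways to close this gap. The cleanest is to read $\mathbf{a}^{\lambda}$ as a randomized policy obtained by tossing a single coin with bias $\lambda$ and applying $\mathbf{a}'$ or $\mathbf{a}$ to the whole population accordingly; each realization is then one of the two feasible deterministic policies, and the expectation defining $\Delta I$ already averages over this extra randomness, leaving the computation above untouched. If one insists on a genuinely deterministic policy, I would instead invoke the atomlessness of $G_{s}$ together with a Lyapunov-type convexity-of-the-range argument to realize the fractional assignment by a deterministic policy that partitions the utility space, matching both the expected improvement and the capacity usage exactly. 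I would flag this measure-theoretic derandomization as the only delicate step; the affine/convexity core of the argument is otherwise immediate, and it is what drives both the improvement and the regret versions of the claim.
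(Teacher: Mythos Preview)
Your argument is correct and rests on the same convex-interpolation idea as the paper: since $\Delta I$ is affine in the allocation, any value in $[\delta,\delta']$ is reached by mixing $\mathbf{a}$ and $\mathbf{a}'$ with the appropriate weight $\lambda=(\delta^{*}-\delta)/(\delta'-\delta)$. The paper, however, implements the mixture differently: rather than forming the fractional policy $\lambda\mathbf{a}'+(1-\lambda)\mathbf{a}$ (or tossing a single global coin), it randomly partitions the \emph{population} into two subpopulations of sizes $\lambda N$ and $(1-\lambda)N$, splits each capacity $c_{k}$ in the same proportions, and applies $\mathbf{a}$ to one part and $\mathbf{a}'$ to the other. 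This construction keeps every individual's assignment in $\{0,1\}^{K}$ conditional on the partition, so it sidesteps the fractional-allocation concern you flag without invoking any Lyapunov-type derandomization. Your global-coin-toss reading is equally valid (each realization is one of the two feasible deterministic policies, and $\Delta I$ is already an expectation), so the difference is purely in how the randomization is instantiated; neither version actually requires atomlessness of $G_{s}$.
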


\begin{proof}
We show the result for differences in \improvement. The proof can be readily extended to differences in \regret. We choose $\lambda = \frac{\delta^{'} - \delta^{*}}{\delta^{'} - \delta}\in [0, 1]$. We define an allocation policy $\mathbf{a}^{\lambda}$ as follows:
\begin{itemize}
    \item Partition randomly the individuals into two populations $G_{\lambda}$ and $G_{1-\lambda}$ of size $\lambda N$ and $(1-\lambda )N$, respectively.
    \item For each program $k$, assign $\lambda c_{k}$ of them to the population $G_{\lambda}$; and $(1-\lambda)c_{k}$ of them to the population $G_{1 -\lambda}$. 
    \item Apply the allocation policy $\mathbf{a}$ to the population $G_{\lambda}$ and $\mathbf{a}^{'}$ to the population $G_{1-\lambda}$. 
\end{itemize}
By construction the policy $a_{\lambda}$ satisfies the resource constraints. Moreover,
\begin{equation}
        \Delta I(\mathbf{a}_{\lambda}) = \Delta I(\mathbf{a}) P(G_{\lambda}) + \Delta I(\mathbf{a}^{'}) (1 - P(G_{1-\lambda}) =\delta \lambda + \delta^{'}(1-\lambda) = \delta^{*},
\end{equation}
where the last equality comes from our choice for the value of $\lambda$. 
\end{proof}

\begin{proof}
Theorem \ref{cor: 2}. We choose $\epsilon = \frac{E[\Delta U|S=1] - E[\Delta U | S=0]}{2}$. $\epsilon >0$ by assumption. Using the assumption of Theorem \ref{cor: 2}, there exist $\mathbf{a}$ and $\mathbf{a}^{'}$ such that $\Delta I(\mathbf{a}) < 0$ and $\Delta I(\mathbf{a}^{'}) > 0$. We apply Lemma \ref{lem: 1} with $\delta =\Delta I(\mathbf{a})< 0 $, $\delta^{'}=\Delta I(\mathbf{a}^{'}) > 0$ and $\delta^{*}=\min\{\epsilon, \delta^{'} / 2\}$: there exists a policy $\mathbf{a}^{*}$ such that $\Delta I(\mathbf{a}^{*}) = \delta^{*} > 0$. Moreover, $\Delta R(\mathbf{a}^{*}) = E[\Delta U|S=1] - E[\Delta U | S=0] - \Delta I(\mathbf{a}^{*}) \geq \epsilon > 0$.
\end{proof}

Thus, \regret fairness and \improvement fairness cannot hold simultaneously unless populations are homogeneous in terms of their best response to the allocation (Theorem \ref{thm: trade-off}). Moreover, assessing which group is favored by a given policy can lead to contradictory results depending on whether we measure the fairness properties of the policy in terms of differences in \improvement or regret. The result in Theorem \ref{cor: 2} illustrates that decision-makers cannot expect that \improvement and \regret notions tell a similar story about whether an allocation policy under-serves a given group. Results Theorem \ref{thm: trade-off} and Theorem \ref{cor: 2} are general, since they hold for any set of capacities $c_{1}$, ..., $c_{K}$ and for distributions of utilities provided that $E[\Delta U|S=1] > E[\Delta U | S=0]$. Both illustrate the central role of the difference between $E[\Delta U|S=0]$ and $E[\Delta U|S=1]$ in driving wedges between \improvement and \regret fairness. Additionally, Theorem \ref{cor: 2} is not very restrictive in its assumptions, since it only requires that no group is under-served regardless of the policy. 

\subsection{\Equitability versus \Gainovermin}
In this section, we show that the fairness trade-offs between \improvement and \regret exist also with multiplicative notion of fairness, \gainovermin and \equitability. Unlike trade-offs between improvement and regret where our results are general, in the case of \equitability versus \gainovermin, we derive results in a stylized framework and leave it to future work to extend our results to more general settings. Nevertheless, this section captures the essence of the problem in the multiplicative setting. We denote for each individual by $r=u^{\min}/u^{\max}$ the ratio between the lowest and highest utility obtained from the intervention. This serves as a multiplicative counterpart of $\Delta u$. We consider the following framework (SF1):
\begin{itemize}
    \item There are two types of individuals: type $A$ with high value $\overline{r}$ for the ratio $r$; type $B$ with a low value $\underline{r}< \overline{r}$ for $r$. 
    \item Conditional on $r$, the distribution of utility is similar across programs and types.
\end{itemize}
In this stylized framework, assigning to an individual their most beneficial program delivers either a large increase $\overline{r}$ over $u^{\min}$ (type A) or a lower one $\underline{r}$ (type B). We characterize the heterogeneity across groups by differences in the distribution of type A and B within each group. We denote by $\pi_{0}$ the proportion of type B individuals for group $S=0$; and, $\pi_{1}$ the proportion of type $B$ for group $S=1$.

\begin{thm}
\label{thm: equi}
In the stylized framework (SF1):
\begin{itemize}
\item A policy satisfies both \equitability and \gainovermin fairness if and only if  $\pi_{0}=\pi_{1}$. 
\item If $\pi_{0}\neq \pi_{1}$, a policy $a$ that achieves \gainovermin (\equitability) fairness, favors, according to \equitability (\gainovermin) fairness whichever group has the lowest proportion of type $A$ individuals.
\end{itemize}
\end{thm}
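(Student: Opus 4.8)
The plan is to collapse both multiplicative notions into linear functions of the type-$B$ shares $\pi_0,\pi_1$ and then extract both bullets from elementary sign arguments. Write $v_i=\mathbf{a}.\mathbf{u}_i$ for the realized utility of individual $i$ and use $u^{\min}_i=r_i u^{\max}_i$, so that the \gainovermin summand is $v_i/u^{\min}_i=(v_i/u^{\max}_i)/r_i$, i.e. the \equitability summand rescaled by $1/r_i$. In (SF1) every $r_i$ equals $\overline r$ (type $A$) or $\underline r$ (type $B$), and the clause ``conditional on $r$ the utility distribution is similar across programs and types,'' together with $\mathbf{a}$ depending on $\mathbf{u}$ alone, means the law of the ratio $v_i/u^{\max}_i$ conditional on type does not depend on the group. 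First I would use this to define two group-independent constants $\rho_A=E[\,v/u^{\max}\mid A\,]$ and $\rho_B=E[\,v/u^{\max}\mid B\,]$, both strictly positive whenever realized utilities are positive.

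With these constants the per-group averages appearing in the two definitions are convex combinations over types,
\begin{equation}
E\!\left[\tfrac{1}{N_s}\textstyle\sum_{i:s}\mathbf{a}.\tfrac{\mathbf{u}_i}{u^{\max}_i}\right]=(1-\pi_s)\rho_A+\pi_s\rho_B,\qquad E\!\left[\tfrac{1}{N_s}\textstyle\sum_{i:s}\mathbf{a}.\tfrac{\mathbf{u}_i}{u^{\min}_i}\right]=(1-\pi_s)\tfrac{\rho_A}{\overline r}+\pi_s\tfrac{\rho_B}{\underline r},
\end{equation}
so that, up to a common positive factor, $\Delta S(\mathbf{a})\propto(\pi_0-\pi_1)(\rho_A-\rho_B)$ and $\Delta G(\mathbf{a})\propto(\pi_0-\pi_1)\big(\rho_A/\overline r-\rho_B/\underline r\big)$. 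This identity is the workhorse for both bullets.

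For the first bullet, if $\pi_0=\pi_1$ both differences vanish for every feasible policy, so \equitability and \gainovermin hold automatically. Conversely, suppose both hold with $\pi_0\neq\pi_1$: \equitability forces $\rho_A=\rho_B$ and \gainovermin forces $\rho_A/\overline r=\rho_B/\underline r$; substituting the first into the second gives $\rho_A(1/\overline r-1/\underline r)=0$, impossible since $\underline r<\overline r$ and $\rho_A>0$. Hence $\pi_0=\pi_1$, as claimed. For the second bullet I would impose one notion and evaluate the other through the same identity. Imposing \gainovermin gives $\rho_B=\rho_A\,\underline r/\overline r<\rho_A$, so $\rho_A-\rho_B>0$ and $\operatorname{sign}\Delta S(\mathbf{a})=\operatorname{sign}(\pi_0-\pi_1)$; imposing \equitability gives $\rho_A=\rho_B=\rho>0$, so $\Delta G(\mathbf{a})\propto(\pi_0-\pi_1)\rho(1/\overline r-1/\underline r)$, whose sign is that of $-(\pi_0-\pi_1)$. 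In each case the off-metric difference is nonzero exactly when $\pi_0\neq\pi_1$, and since group $s$ has type-$A$ share $1-\pi_s$, translating these signs through the ``favors'' convention names the favored group in terms of the type-$A$ proportions.

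The hard part will be the sign bookkeeping in this last step rather than any computation. Because \equitability rewards a high ratio $v/u^{\max}$ whereas \regret and the additive notions read naturally as losses, I would first fix unambiguously what ``favors group $s$'' means for each of $\Delta S$ and $\Delta G$, and then verify that the two signs above --- which carry opposite dependence on $(\pi_0-\pi_1)$ --- align with the stated ``smaller type-$A$ share'' description once each metric's direction of preference is taken into account. A secondary obstacle is rigorously justifying the group-independence of $\rho_A,\rho_B$ from (SF1): I would argue that since $\mathbf{a}$ depends only on $\mathbf{u}$ and, conditional on type, the utility law is common across groups, every type-$A$ (resp.\ type-$B$) individual faces the same realized-ratio distribution regardless of group, so the per-type averages are genuinely shared and the convex-combination form above is exact.
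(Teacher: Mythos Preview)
Your approach is essentially the same as the paper's. The paper conditions on type and sets $\overline{\alpha}=E[\mathbf{a}.\mathbf{u}/u^{\min}\mid r=\overline r]$, $\underline{\alpha}=E[\mathbf{a}.\mathbf{u}/u^{\min}\mid r=\underline r]$, obtaining $\Delta G(\mathbf{a})=(\pi_0-\pi_1)(\overline{\alpha}-\underline{\alpha})$ and $\Delta S(\mathbf{a})=(\pi_0-\pi_1)(\overline{\alpha}\,\overline r-\underline{\alpha}\,\underline r)$; your $\rho_A,\rho_B$ are just the reparametrization $\rho_A=\overline{\alpha}\,\overline r$, $\rho_B=\underline{\alpha}\,\underline r$, and from there the two arguments are identical. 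Your explicit positivity hypothesis $\rho_A>0$ makes rigorous a step the paper leaves implicit, and you carry the second bullet through both directions whereas the paper only spells out the gain-to-shortfall direction; your caution about the ``favors'' sign convention is well placed, since the paper's own statement and proof are not entirely consistent on that point.
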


\begin{proof}
Let $\underline{\alpha}$  denote $E\left[\frac{\mathbf{a}(u).\mathbf{u}}{u^{\min}}|r=\underline{r}\right]$ and $\overline{\alpha}$  denote $E\left[\frac{\mathbf{a}(u).\mathbf{u}}{u^{\min}}|r=\overline{r}\right]$. Then, we write (for any policy) differences in \gainovermin as 
\begin{equation}
\label{eq: if}
    \Delta G(\mathbf{a}) = \left\{\pi_{1}\underline{\alpha} + (1 - \pi_{1}) \overline{\alpha}\right\}  -  \left\{\pi_{0}\underline{\alpha} + (1 - \pi_{0}) \overline{\alpha}\right\} =  (\pi_{0} - \pi_{1})(\overline{\alpha} - \underline{\alpha})
\end{equation}
and differences in \equitability as 
\begin{equation}
    \Delta S(\mathbf{a}) =\left\{\pi_{1}\underline{\alpha}\underline{r} + (1 - \pi_{1}) \overline{\alpha}\overline{r}\right\} -  \left\{\pi_{0}\underline{\alpha}\underline{r} + (1 - \pi_{0}) \overline{\alpha}\overline{r}\right\} = (\pi_{0} - \pi_{1})(\overline{\alpha}\overline{r} - \underline{\alpha}\underline{r}).
\end{equation}
Therefore, \gainovermin and \equitability fairness are equivalent to 
$(\pi_{0} - \pi_{1})(\overline{\alpha} - \underline{\alpha})=0$ and $(\pi_{0} - \pi_{1})(\overline{\alpha}\overline{r} - \underline{\alpha}\underline{r})=0$. Hence, if $\pi_{0}\neq \pi_{1}$, $\overline{\alpha}=\underline{\alpha}$ and $\overline{\alpha}\;\overline{r} = \underline{\alpha}\;\underline{r}$, which is not possible since $\underline{r} \neq \overline{r}$.  

To show the second part of Theorem \ref{thm: equi}, we use the fact that \gainovermin fairness implies that $\overline{\alpha}=\underline{\alpha}$ (equation \eqref{eq: if}) and that the difference in \equitability between group $S=1$ and $S=0$ can be then written $\Delta S(\mathbf{a})=(\pi_{0} - \pi_{1})(\overline{r}-\underline{r})\overline{\alpha}$, which have the same sign as $\pi_{0} - \pi_{1}$ since $\overline{r}>\underline{r}$. Therefore, if $\pi_{0} > \pi_{1}$, the policy favors group $S=1$ with respect to \equitability fairness; otherwise, it favors group $S=0$.
\end{proof}

Theorem \ref{thm: equi} states that \equitability and \gainovermin can be satisfied simultaneously if and only if populations have similar fractions of type $A$ individuals. It is similar in spirit to the results above, showing that unless populations meet stringent requirements of similarity in utility distributions between groups (in this case instantiated by the fractions of the two types in each population), the versions of fairness characterized by comparing with the min versus the max cannot be simultaneously satisfied. 

\subsection{Multiplicative versus Additive Normalization}

\Improvement and \gainovermin fairness aim at capturing a similar fairness concept: groups receive on average the same increase in utility relative to assigning the least beneficial service. Both fairness metrics differ only by whether the normalization relative to the lowest utility that an individual can derive from the overall intervention is additive or multiplicative. In this section, we show that even the choice of normalization generates inherent fairness trade-offs. 

We consider the following stylized framework (SF2):
\begin{itemize}
    \item There are two types of individuals: type $C$ for which $u^{\min}$ takes a low value $\underline{u}$; and type $D$ for which  $u^{\min}$ takes a larger value $\overline{u} > \underline{u}$. 
    \item Conditional on $u^{\min}$, the distribution of utility is similar across programs and types.
\end{itemize}
Although stylized, both assumptions allow us to characterize the heterogeneity across groups by differences in their distribution over $u^{\min}$. Let $p_{s}$ denote the fraction of type $C$ for group $S=s$. Differences in $p_{s}$ across groups imply differences in the distribution of utility $P(U|S)$ within each group, even if the conditional distribution $P(U|U^{\min})$ is similar across types.

\begin{thm}
\label{thm: 2}
In the stylized framework (SF2) with types $C$ and $D$, a policy $\mathbf{a}$ satisfies both \improvement fairness and \gainovermin fairness for group $S=0$ and $S=1$ if and only if one of the following conditions holds:
\begin{itemize}
\item $p_{0}=p_{1}$;
\item the policy $\mathbf{a}$ assigns the least beneficial program to everyone (i.e. $\mathbf{a}.\mathbf{u}=u^{\min}$). 
\end{itemize}
\end{thm}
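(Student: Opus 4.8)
The plan is to follow the same template as the proof of Theorem \ref{thm: equi}, exploiting the fact that in SF2 each individual is fully characterized by the value of $u^{\min}\in\{\underline{u},\overline{u}\}$, and that conditional on $u^{\min}$ the realized-utility distribution induced by any fixed policy $\mathbf{a}$ is identical across types and groups. I would introduce four conditional expectations: the expected \improvement of the two types, $\underline{\beta}=E[\mathbf{a}.(\mathbf{u}-u^{\min})\mid u^{\min}=\underline{u}]$ and $\overline{\beta}=E[\mathbf{a}.(\mathbf{u}-u^{\min})\mid u^{\min}=\overline{u}]$, and the expected \gainovermin of the two types, $\underline{\gamma}=E[\mathbf{a}.\tfrac{\mathbf{u}}{u^{\min}}\mid u^{\min}=\underline{u}]$ and $\overline{\gamma}=E[\mathbf{a}.\tfrac{\mathbf{u}}{u^{\min}}\mid u^{\min}=\overline{u}]$. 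Because the policy reads only the utility vector and $P(U\mid U^{\min})$ does not depend on type or group, these four numbers are well defined and group-independent.

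The key structural observation I would record is that the additive and multiplicative normalizations are tied together through $u^{\min}$. Since exactly one coordinate of $\mathbf{a}$ is nonzero, $\mathbf{a}.u^{\min}=u^{\min}$, so
\[
\mathbf{a}.\frac{\mathbf{u}}{u^{\min}}=\frac{\mathbf{a}.\mathbf{u}-u^{\min}}{u^{\min}}+1 ,
\]
and taking conditional expectations gives $\underline{\gamma}=\underline{\beta}/\underline{u}+1$ and $\overline{\gamma}=\overline{\beta}/\overline{u}+1$. Next, exactly as in equation \eqref{eq: if}, I would decompose each group's average over its two types (fraction $p_{s}$ of type $C$, $1-p_{s}$ of type $D$) to obtain the factorizations $\Delta I(\mathbf{a})=(p_{0}-p_{1})(\overline{\beta}-\underline{\beta})$ and $\Delta G(\mathbf{a})=(p_{0}-p_{1})(\overline{\gamma}-\underline{\gamma})$. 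Simultaneous \improvement and \gainovermin fairness is then equivalent to the pair of equations $(p_{0}-p_{1})(\overline{\beta}-\underline{\beta})=0$ and $(p_{0}-p_{1})(\overline{\gamma}-\underline{\gamma})=0$.

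The case analysis drives the conclusion. If $p_{0}=p_{1}$ both equations hold automatically, giving the first alternative. If $p_{0}\neq p_{1}$, both remaining factors must vanish, so $\overline{\beta}=\underline{\beta}=:\beta$ and $\overline{\gamma}=\underline{\gamma}$; substituting the linking identities yields $\beta/\overline{u}=\beta/\underline{u}$, which, since $\overline{u}\neq\underline{u}$, forces $\beta=0$. The step I expect to be the crux is upgrading $\overline{\beta}=\underline{\beta}=0$ to the statement that $\mathbf{a}$ assigns the least beneficial program to (almost) everyone: this follows because the \improvement $\mathbf{a}.(\mathbf{u}-u^{\min})$ is a nonnegative random variable (the assigned utility is at least $u^{\min}$), so a zero conditional mean on each type forces it to be zero almost surely, i.e. $\mathbf{a}.\mathbf{u}=u^{\min}$. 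The converse is immediate in both directions, since $\mathbf{a}.\mathbf{u}=u^{\min}$ makes every improvement term $0$ and every gain-over-min term $1$, so $\Delta I(\mathbf{a})=\Delta G(\mathbf{a})=0$, and $p_{0}=p_{1}$ annihilates both factorizations. The two points to handle carefully are justifying the group-independence of $\underline{\beta},\overline{\beta},\underline{\gamma},\overline{\gamma}$ from the SF2 assumption (and implicit positivity of $\underline{u}$ for the multiplicative measure) and the nonnegativity argument that converts ``mean zero'' into ``zero almost surely.''
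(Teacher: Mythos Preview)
Your proposal is correct and follows essentially the same route as the paper: decompose $\Delta I$ and $\Delta G$ over the two $u^{\min}$-types, factor out $(p_{0}-p_{1})$, and when $p_{0}\neq p_{1}$ solve the resulting two scalar equations to force the expected improvement of each type to zero. The only cosmetic difference is that the paper parametrizes by the raw conditional utilities $E[\mathbf{a}.\mathbf{u}\mid U^{\min}]$ rather than by the conditional improvements, and your version is actually a bit more explicit than the paper in two places (the nonnegativity step that upgrades ``mean improvement zero'' to ``$\mathbf{a}.\mathbf{u}=u^{\min}$ almost surely,'' and the converse direction of the ``if and only if'').
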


\begin{proof}
Let $\underline{\beta}$ denote $E[\mathbf{a}.\mathbf{u}|U^{min}=\underline{u}]$ and $\overline{\beta}$ denote $E[\mathbf{a}.\mathbf{u}|U^{min}=\overline{u}]$. Then, we write differences in \improvement as 
\begin{equation}
    \begin{split}
        \Delta I(\mathbf{a})& =\left\{p_{1}\underline{\beta}+(1-p_{1})\overline{\beta}-p_{1}\underline{u}-(1-p_{1})\overline{u}\right\}-\left\{p_{0}\underline{\beta} + (1 - p_{0})\overline{\beta}-p_{0}\underline{u}-(1-p_{0})\overline{u}\right\}   \\
        & =(p_{1}-p_{0})(\underline{\beta}-\overline{\beta} + \underline{u} - \underline{u})
    \end{split}
\end{equation}
and differences in \gainovermin as 
\begin{equation}
\Delta G(\mathbf{a}) = \left\{p_{1}\frac{\underline{\beta}}{\underline{u}} + (1 - p_{1}) \frac{\overline{\beta}}{\overline{u}}\right\}  -  \left\{p_{0}\frac{\underline{\beta}}{\underline{u}} + (1 - p_{0}) \frac{\overline{\beta}}{\overline{u}}\right\}=
    \left(p_{1}-p_{0}\right)\left(\frac{\underline{\beta}}{\underline{u}}-\frac{\overline{\beta}}{\underline{u}}\right)
\end{equation}
Therefore, \improvement fairness and \gainovermin are equivalent to
$(p_{0}-p_{1})(\underline{\beta}-\overline{\beta} + \underline{u} - \underline{u}) = 0$ and $\left(p_{0}-p_{1}\right)\left(\frac{\underline{\beta}}{\underline{u}}-\frac{\overline{\beta}}{\underline{u}}\right) = 0
$. If $p_{0}\neq p_{1}$, \improvement and \gainovermin fairness imply $\underline{\beta}=\frac{\underline{u}}{\overline{u}}\overline{\beta}$ and $\overline{\beta}= \overline{u}$. The latter equality leads to $\mathbf{a}.\mathbf{u}=\overline{u}$ if $u^{\min}=\overline{u}$ and the former equality leads to $\mathbf{a}.\mathbf{u}=\underline{u}$ if $u^{\min}=\underline{u}$.
\end{proof}
Theorem \ref{thm: 2} demonstrates a simple, yet general, setting where \improvement fairness and \gainovermin fairness cannot be obtained simultaneously unless either the distribution of utilities are the same across groups ($p_{0}=p_{1}$) or the policy does not create any utility improvement relative to $U^{\min}$.

\section{Simulations With Utilitarian and Random Allocations}\label{sec:exp}

Thus far, we have not needed to define an allocation policy explicitly, since we were focused on existence results. In this section, we consider two natural allocation policies -- utilitarian (maximizing the sum of utilities of all agents) and random. Both must respect capacity constraints. We simulate a simple environment with two groups and three services. In one setting, members of the two groups have different mean utilities from receiving the three services, while the variances are the same. In the second, members of the two groups have the same mean utilities from receiving the three services, but different variances. We are interested in understanding (1) how the different fairness measures behave in these two settings; (2) the role played by utilitarian objectives in the assignment problem.

In our setting, there are three ($k=1, 2, 3$) services with fixed capacities ($c_{1}=c_{2}=c_{3}=1000$) and $3000$ applicants divided into two groups of equal size: \emph{group 0} and \emph{group 1}. We sample individual utilities for service $k$ from a normal distribution with mean $\mu_{sk}$ and standard deviation $\sigma_{sk}$, where $s=0$ for \emph{group 0} and $s=1$ for \emph{group 1}.

\subsection{Groups with Different Means}
\begin{figure}
\subfloat[Distribution of $\Delta U$ ]{\includegraphics[width=.295\textwidth]{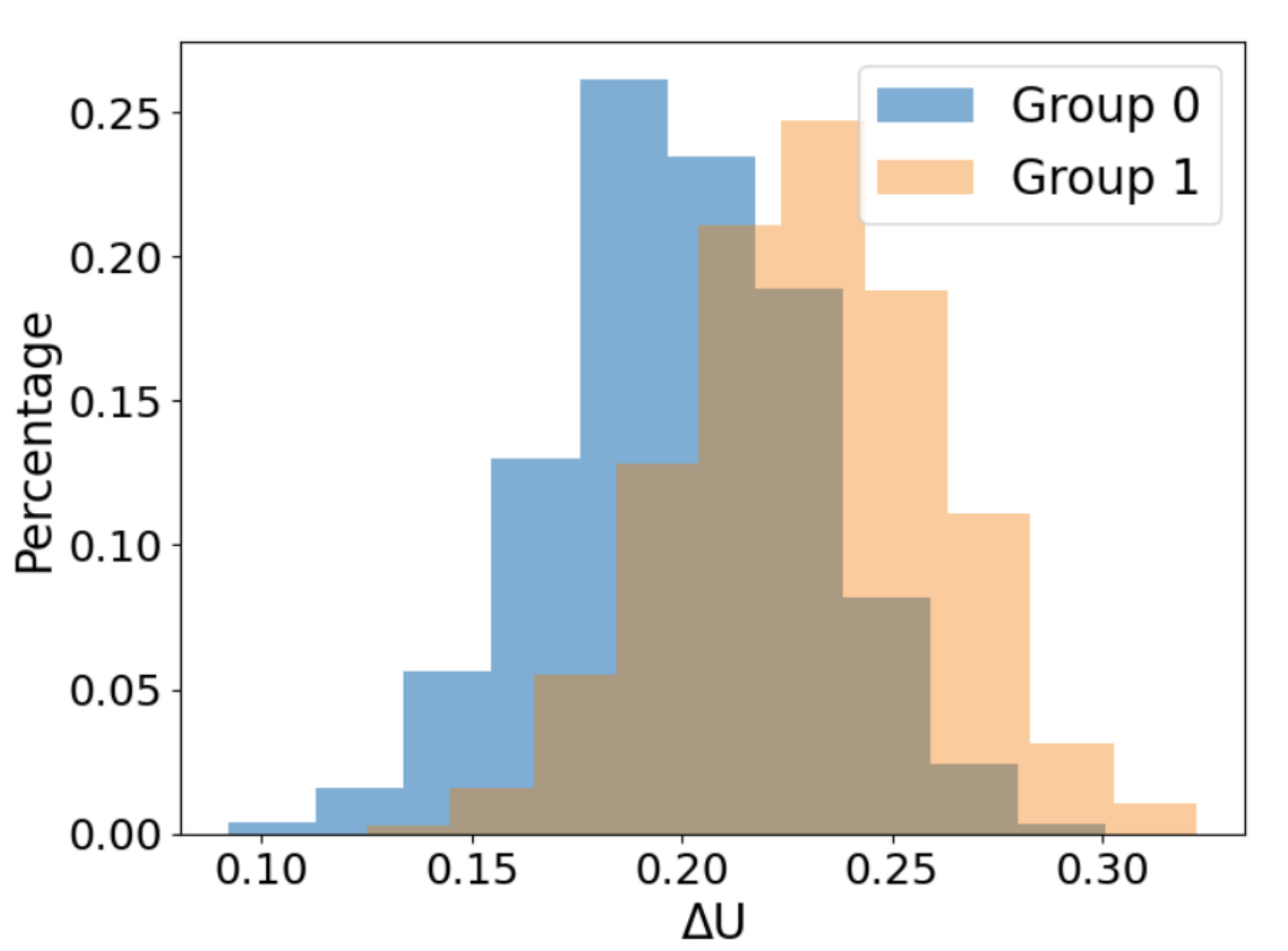}}
\subfloat[\Improvement vs. \Regret]{\includegraphics[width=.36\textwidth]{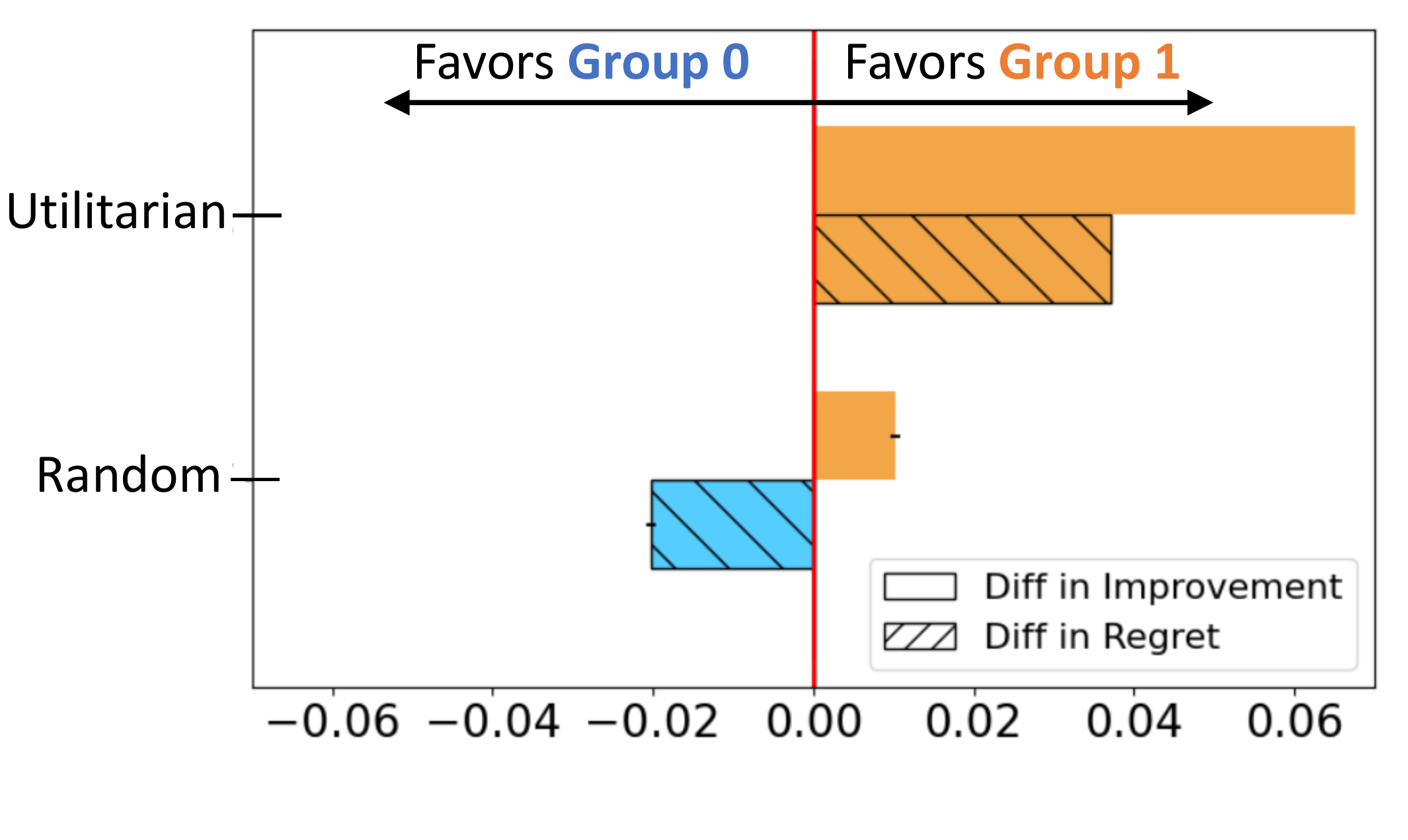}\label{fig:imp_reg_case1}}
\hspace{0.05em}
\subfloat[\Gainovermin vs. \Equitability]{\includegraphics[width=.305\textwidth]{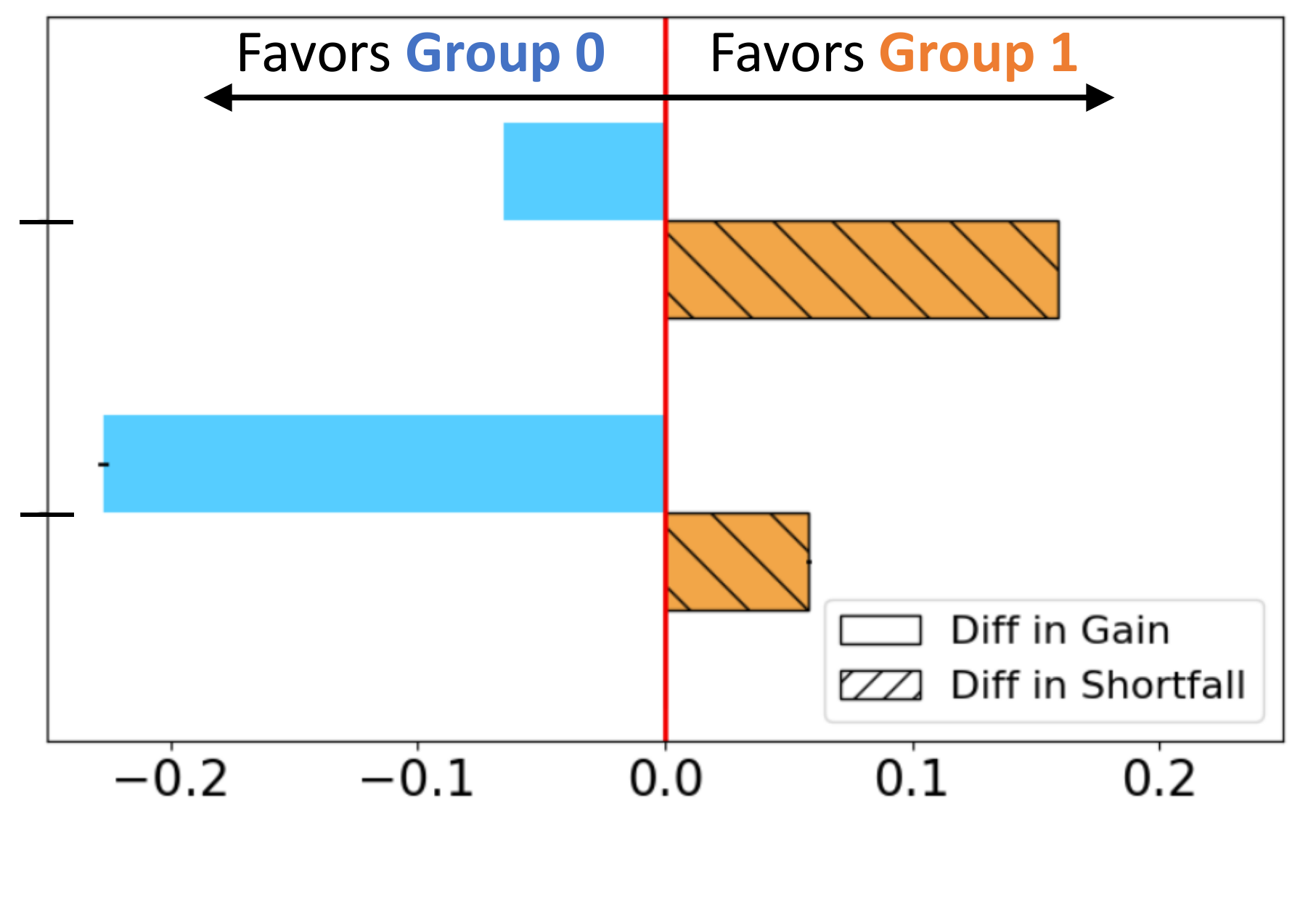}}
\caption{Simulation results when groups have different mean of utilities. Panel~(a) shows the distribution of the maximum utility gains  $\Delta U = U^{\max}-U^{\min}$ for \emph{group 0} (blue), and \emph{group 1} (orange). Panel~(b) shows the differences in improvement and regret, Panel~(c) shows the differences in gain and shortfall. Error bars show the 95\% confidence interval of each fairness metric over 100 instantiations of the random allocation.}
\label{fig:diff_mean_same_std}
\end{figure} 

In this set of simulations, we study the behavior of fairness measures when individual utilities are sampled from group-dependent distributions. The groups have different sample means $\mu$ but the same variances $\sigma^{2}$. For
 \emph{group 0}, the means of the three services are $\mu_{01}=0.2, \mu_{02}=0.3, \text{and}, \mu_{03}=0.4$. For \emph{group 1}, the means are $\mu_{11}=0.4, \mu_{12}=0.5,  \text{and}, \mu_{13}=0.63$
The variances of the three services for both groups are equal, $\sigma_{01}^{2} = \sigma_{11}^{2}= \num{1e-4}$
, $\sigma_{02}^{2} = \sigma_{12}^{2}=\num{4e-4}$, and, $\sigma_{03}^{2} = \sigma_{13}^{2}=\num{9e-4}$. Individuals in \emph{group 1} have on average higher utilities for all services. 

As pointed out in section~\ref{subsec:imp_vs_regret}, we observe in Figure \ref{fig:diff_mean_same_std} that the difference in $\Delta U$ leads to a trade-off between the \improvement and \regret fairness metrics. 
Figure \ref{fig:diff_mean_same_std} shows that even for a random assignment, different metrics lead to conflicting fairness assessment. 
The \improvement fairness metric favors the group with higher mean $\Delta U$ (\emph{group 1}), and \regret favors the groups with lower mean $\Delta U$ (\emph{group 0}). To complicate fairness assessment further, switching from additive to multiplicative normalization reverses which group is favored.

Moreover, 
the utilitarian allocation appears to favor \emph{group 1} according to \improvement, \regret and \gainovermin, but favors \emph{group 0} in terms of \equitability. These results confirm in a simulated environment that utility normalization has profound implications on how we assess the fairness of an allocation.

\subsection{Groups with Equal Means and Different Variances}

\begin{figure}

\subfloat[Distribution of $\Delta U$ ]{\includegraphics[width=.3\textwidth]{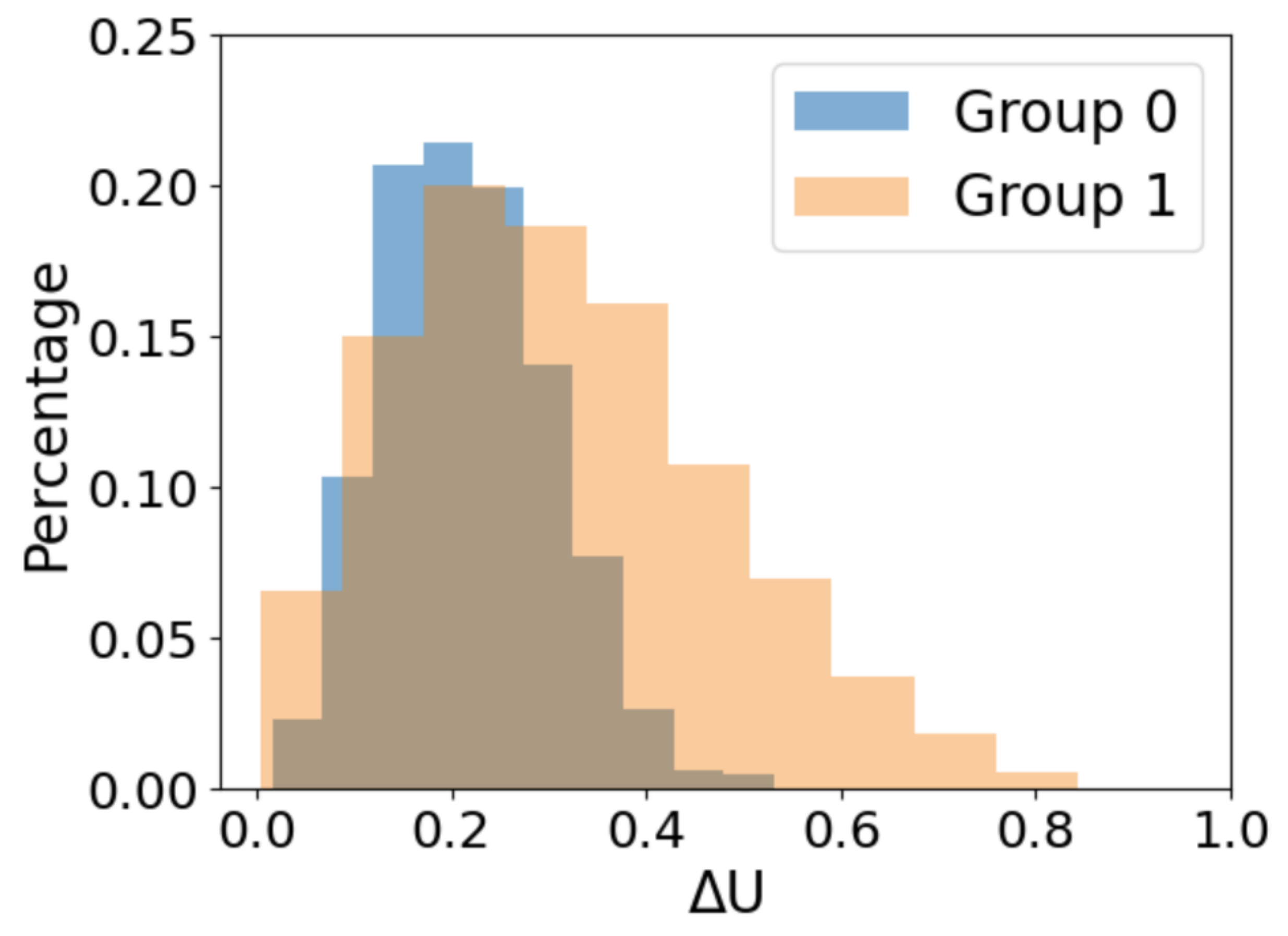}}
\subfloat[\Improvement vs. \Regret (Utilitarian)]{\includegraphics[width=.329\textwidth]{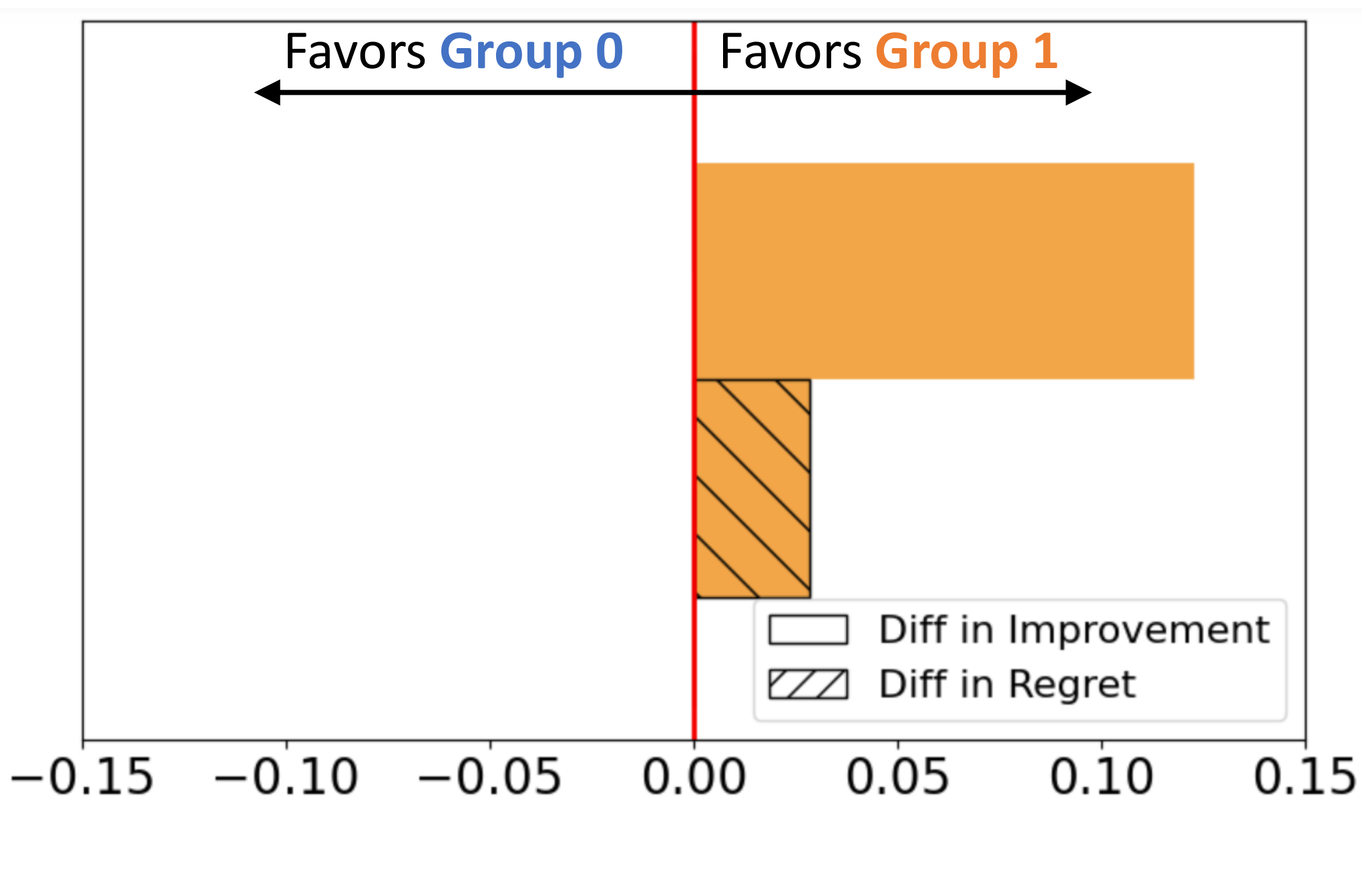}\label{fig:imp_reg_case2}}
\subfloat[ \Gainovermin vs. \Equitability (Utilitarian)]{\includegraphics[width=.33\textwidth]{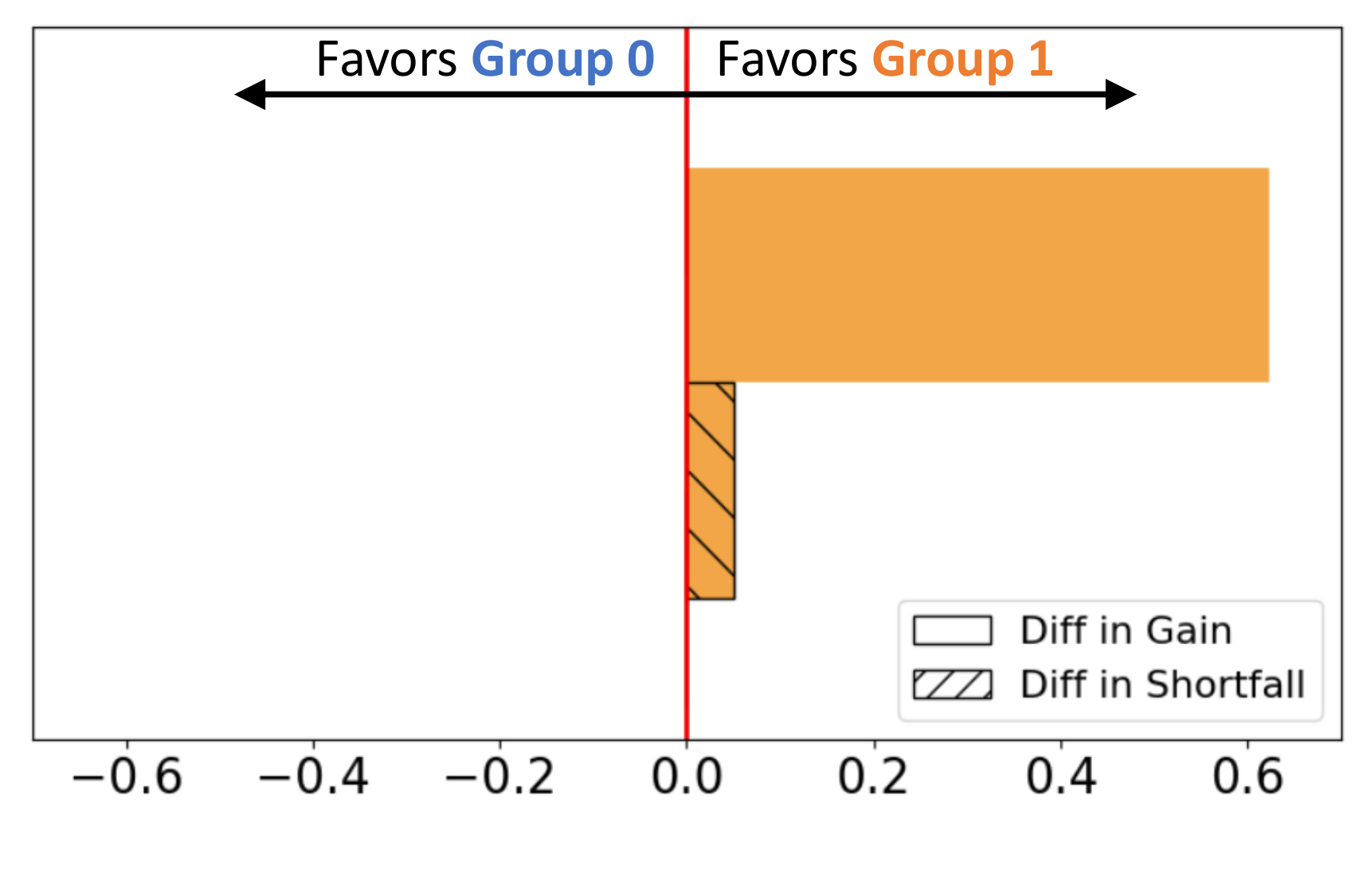}}\\
      \caption{Simulation results when groups have the same mean utilities for the services, but different variances. Panel~(a) shows the distribution of the maximum utility gains  $\Delta U = U^{\max}-U^{\min}$ for \emph{group 0} (blue), and \emph{group 1} (orange). Panel~(b) shows the differences in improvement and regret, Panel~(c) shows the differences in gain and shortfall. 
      Group 1 is favored strongly by all the fairness measures when allocations are utilitarian.}
      \label{fig:same_mean_diff_std}
\end{figure} 

In our second set of simulations, we study the effects of groups having similar means but different variances, a situation that is commonly discussed, for instance in the context of gender differences in student performance~\cite{baye2016gender}. In this case, we hypothesize that the higher variance group is likely to be favored by utilitarian allocations. For both groups, the means for the three services are equal,  $\mu_{01}=\mu_{11} = 0.4$, $\mu_{02}=\mu_{12} = 0.5$, and $\mu_{03}=\mu_{13} = 0.6$. 
For \emph{group 0}, the variances for the three interventions are set to $\sigma_{01}^{2} = \num{9e-5}$, $\sigma_{02}^{2} = \num{2e-3}$, $\sigma_{03}^{2} = \num{1e-2}$, while for \emph{group 1},
the variances for the three interventions are set to
$\sigma_{11}^{2} = \num{9e-3}$, $\sigma_{12}^{2} = \num{2e-2}$, 
$\sigma_{13}^{2} = \num{3e-2}$. Thus, \emph{group 0} has lower variance. 

Our results in Figure \ref{fig:same_mean_diff_std} show that, as hypothesized, the group with larger variance (group 1) is indeed favored according to all fairness metrics. When maximizing the sum of utilities, it is optimal to assign their best services to individuals with utilities in the tail of the distribution. We find that a larger fraction (65\%) of individuals in \emph{group 1} than in \emph{group 0} (46\%) receive the service that maximizes their utility.

We leave it for future research to investigate further the role of variance on the fairness properties of a utilitarian allocation.

\section{Fairness Trade-offs in Homeless Service Delivery}

Our theoretical analysis suggests that heterogeneity in service responses across groups drives fairness metrics in opposite directions. In this section, we investigate whether the fairness tradeoffs emerge in the capacitated assignment of homeless services across several sub-populations. We hypothesize that if sociodemographic group differences exist in the utilities received from allocations (and in particular, between the differences in the best versus worst allocations), then we should see tradeoffs between \improvement versus \regret fairness,  \equitability versus \gainovermin, and \improvement versus \gainovermin. We provide evidence for both the antecedent (\emph{heterogeneity in responses across groups}) and the consequent (\emph{inherent fairness trade-offs between groups}).

\subsection{Background}
Homelessness represents a socioeconomic and public health challenge for many communities in the United States. Approximately $1.5$ million people experience homelessness for at least one night every year \cite{henry2020ahar, fisher2018homelessness}. Homelessness has short- and longer-term implications on health, employment, and crime \cite{fowler2019solving, khadduri2010costs, cohen2020housing}. Guided by federal policies, communities offer an array of services for households lacking stable and permanent living accommodations. We study three main homeless services: Transitional Housing (TH); Rapid Rehousing (RRH) and Emergency Shelter (ES). Transitional Housing provides accommodation for up to 24 months with comprehensive case management to address barriers toward stable housing, such as substance abuse and issues related to behavioral health. Rapid Rehousing offers access to rental units for six months without intensive case management. Emergency Shelter provides a bed to sleep at night for no more than one or two months. On a daily basis, caseworkers assign homeless households seeking assistance to an available service, reserving the most intensive TH for those with greater needs.

\subsection{Data}
\label{sec:data}
Our main dataset is based on estimated probabilities of households re-entering homelessness services within two years after initial receipt of services. This data, collected by \cite{kube2019fair} is publicly available.\footnote{\url{https://github.com/amandakube/Allocating-Homelessness-Interventions---Counterfactual-Predictions}} The estimates are based on applying a machine learning model (BART \cite{hill2011bayesian}) to administrative records that tracked service provision in a metropolitan area from 2007 through 2014. Service providers collected demographic and household characteristics upon entry into the system, and data capture the intervention assigned and whether households subsequently requested additional assistance \cite{kube2019fair}. The model estimates counterfactual probabilities $p_{ik}$ of a household $i$ to re-enter the homeless system within 2 years given the assignment of a specific service $k$, where $k\in \{TH, RRH, ES\}$. The original data also tracks responses to homelessness prevention -- time-limited monetary assistance that differs from the other three interventions that allocate actual bed space. Given that the constraints on homelessness prevention are different, we focus here only on households that needed actual bed space (and were therefore not eligible to receive prevention services). Therefore, our final data contains $3,375$ households and they received either TH, RRH, or ES. 

We compute the utility of service $k$ to individual $i$ as $u_{ik}=1-p_{ik}$. 
We obtained from Kube et al. additional sociodemographic characteristics
for each household, including race, gender, age, disability status,
presence of spouse and/or children, and household size. 

We define a series of sociodemographic groups and intersectional identities expected to exhibit substantial heterogeneity in responses to homeless services. First, households with disabilities are considered more vulnerable, and prior research shows that more vulnerable households do best with more intensive services ~\cite{aubry2020,munthe-kaas2018}. Therefore, we expect households with disabilities to benefit more from TH and less from ES than the rest of the population. Second, families with children under the age of 18 experience homelessness due to socioeconomic reasons rather than disability and vulnerability, and thus, we anticipate families will respond better to rapid rehousing than more intensive TH~\cite{cunningham2015rapid, fertig2008homelessness, rog2007characteristics}. Third, we examine the intersection between gender and family status, assuming that single female households without children do better in TH compared with single female-headed families with children, who are more likely to benefit from RRH. Fourth, we look within households headed by youth aged 18 to 24 years to compare disability status (versus no disability) and family status (children versus no children), hypothesizing that those with disabilities benefit more from TH and families with children from RRH \cite{morton2020}. Lastly, given the over-representation in homelessness of minorities and especially Black households, we test how race affects homeless service utilities \cite{henry2020ahar}. Prior research suggests the causes of homelessness vary for White people, who more likely experience disabilities, versus Black people, who experience greater housing discrimination and marginalization ~\cite{jones2016does}. Moreover, race intersects with gender (males vs females) and family status (with children versus without children) in ways that could drive variation in homeless service outcomes. 

\subsection{Heterogeneity across Demographic Groups}
In this section, we document heterogeneity in the distributions of utility across various sociodemographic groups. For each household, we compute the difference $\Delta U$ between its best and worst utility. 

Figure \ref{fig:delta_u_het} shows heterogeneity in response to homeless services across 
households with and without reported disabilities; with and without children. The distribution of $\Delta U$ for households with a disability skews to the right (panel a)); assigning the best service to a disabled client has a larger impact in terms of the probability to re-enter the homeless system than assigning a client without a disability to its most beneficial service. The difference in the means of the distributions is statistically significant with a t-statistic of 8.5 and p-value infinitesimally small. This finding aligns with prior research that shows vulnerable households do best with more intensive services \cite{aubry2020,munthe-kaas2018}. The distribution of $\Delta U$ for families without children skews strongly to the right compared with to households with children (panel b)). The mean of $\Delta U$ for households without children is 0.07, while it is only 0.04 for household with children. The difference is statistically significant with a t-statistic of 29.0 and a p-value  infinitesimally small. This result illustrates how families with children differ in their responses to housing assistance compared to homeless individuals.  

\begin{figure}
    \centering
    \includegraphics[width=0.95\textwidth]{./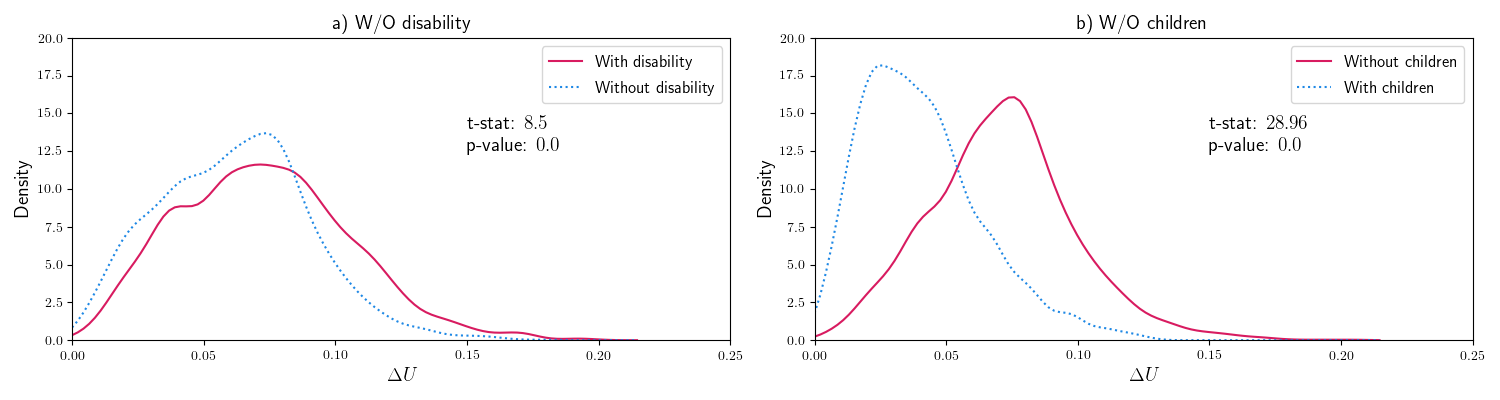}
    \caption{Distribution of the maximum utility gain $\Delta U$ that individuals can derive from the homeless system across various demographic groups. We  obtain the probability density function of $\Delta U = U^{max}-U^{min}$ via Gaussian kernel density estimation with a bandwidth of $0.2$. Differences in probability density functions between households with and without disability (Panel a)) and with and without spouse (Panel b)) illustrate heterogeneous responses to housing assistance.}
    \label{fig:delta_u_het}
\end{figure}

In Figure \ref{fig:delta_u_hom}, we look at intersectional sociodemographic groups. We find in panel c) that the impact of different homeless services for a single female depends strongly on whether there are children in the household. Similarly, youth with and without disability respond differently to homeless services (panel d)). For both intersections, the difference in means is statistically significant with a t-statistic equal to 25.7 for single female versus single mother and to 5.1 for youth with a disability versus youth without a disability. 

Figure \ref{fig:delta_u_inter} explores differential responses to housing assistance by race and shows substantial differences in the distribution of $\Delta U$ between Black and White males (Panel g)). 
Black homeless populations may on average benefit more from more intensive homeless services. Prior research \cite{jones2016does} suggests that social discrimination and socio-economic disadvantage could increase the risk for homelessness among populations with perceived Black background and that housing assistance could mitigate some of these vulnerabilities. 

\begin{figure}
    \centering
    \includegraphics[width=0.95\textwidth]{./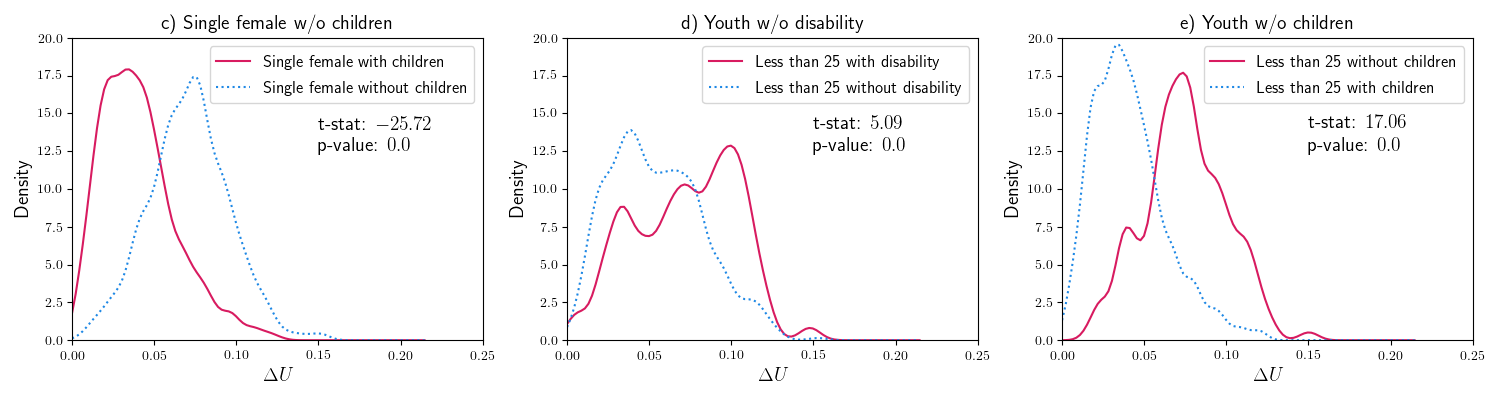}
    \caption{Same as Figure \ref{fig:delta_u_het} but for intersection groups single female with and without children (Panel c)); youth under 25 with and without disability (Panel d)); and, youth under 25 with and without children (Panel e)). }
    \label{fig:delta_u_hom}
\end{figure}

\begin{figure}
    \centering
    \includegraphics[width=0.95\textwidth]{./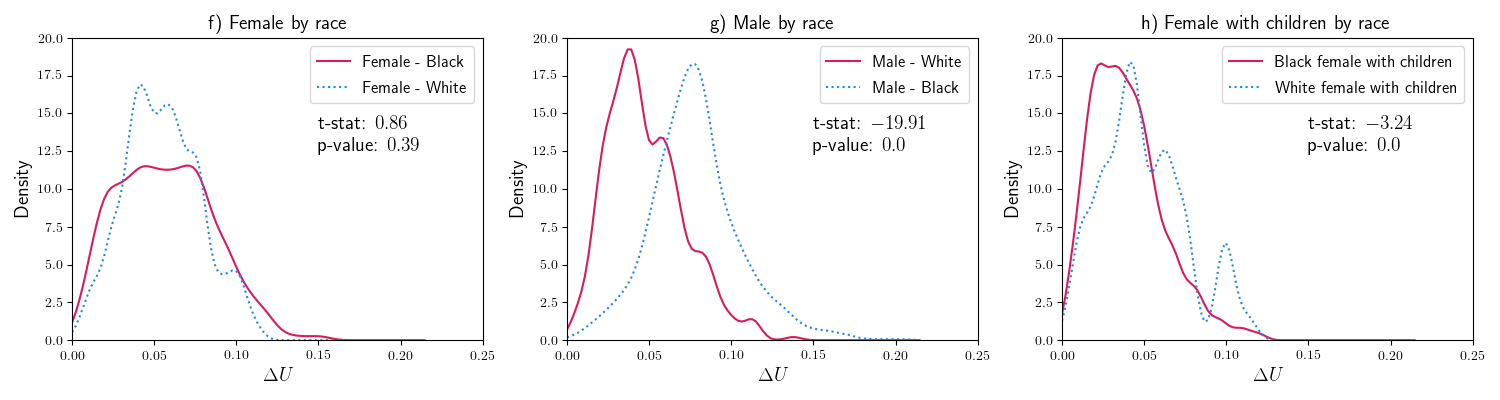}
    \caption{Same as Figure \ref{fig:delta_u_het} but for groups defined by perceived racial background.}
    \label{fig:delta_u_inter}
\end{figure}

Results from Figures \ref{fig:delta_u_het}, \ref{fig:delta_u_hom} and \ref{fig:delta_u_inter} 
suggest that heterogeneity in utility pervades sociodemographic groups. Table \ref{tab: dist_serv} explains some of this heterogeneity by identifying which of the three services (TH, RRH and ES) benefits the most households within each group. 
For the homeless population studied in this paper, TH is the most preferred service for $68\%$ of the population, followed by RRH ($27\%$) and ES ($5\%$). We find that this preference for more intensive care is exacerbated for households with disability ($73\%$ prefer TH), which is in line with prior findings that most vulnerable populations benefit from more integrated care. The preferences of households with a disability toward TH contrasts with the preferences of families with children toward RRH: $67\%$ of households with children benefit the most from RRH, while TH is the best service for only $16\%$ of families. This observation holds true for all intersectional groups that include children and could explain differences between males and females, 
since females are more likely to live with children than males. On the other hand, regardless of gender, the most beneficial program is more likely to be TH for the Black homeless population: TH is the most beneficial service for $46\%$ of Black females but only for $34\%$ of White females;  and, for $95\%$ of Black males but only for $80\%$ of White males.

\begin{table}[]
\centering
\caption{Distribution of services that deliver to each household the highest utility across demographic groups. This shows the fraction of households in each demographic group for which ES, TH or RRH leads to the lowest probability to re-enter the homeless system. }
\begin{tabular}{llllllll}
 \toprule
                                & TH   & RRH  & ES  &  & TH   & RRH  & ES \\
                \midrule               
All                             & 0.68 & 0.27 & 0.05 \\
\midrule
With disability                 & 0.73 & 0.23 & 0.03 & Without disability              & 0.66 & 0.28 & 0.06\\
\midrule 
Without children                & 0.85 & 0.14 & 0.01 & With children                   & 0.16 & 0.67 & 0.17 \\
\midrule 
Single female with children     & 0.15 & 0.7  & 0.15 & Single female without children  & 0.7  & 0.3  & 0.01 \\
\midrule 
Less than 25 with disability    & 0.62 & 0.37 & 0.01 & Less than 25 without disability & 0.47 & 0.49 & 0.05 \\
\midrule
Less than 25 without children   & 0.83 & 0.17 & 0.0  & Less than 25 with children      & 0.19 & 0.73 & 0.08 \\
\midrule
Female - Black       & 0.46 & 0.46 & 0.08 & Female - White                  & 0.34 & 0.62 & 0.04 \\
Male - Black                    &0.95  & 0.02 & 0.03 & Male - White                    & 0.8  & 0.14 & 0.06 \\
\bottomrule
\end{tabular}
\label{tab: dist_serv}
\end{table}

\subsection{Fairness Trade-Offs in the Observed Allocation of Homeless Services}
Our theory suggests that heterogeneity in the distribution of the maximum gain $\Delta U$ for housing assistance would drive fairness metrics in opposite directions: (i) there exist assignments of homeless services with conflicting fairness assessment depending on choosing \improvement, \regret, \gainovermin or \equitability as the fairness metric (Theorem~\ref{cor: 2}); (ii) assignments that satisfy improvement fairness could violate regret fairness and vice-versa (Theorem~\ref{thm: trade-off}). 
Since we observe substantial heterogeneity among the sociodemographic and intersectional groups presented in section 5.3, we know by Theorem~\ref{cor: 2} that ambiguous fairness assessments can arise for some policies. However, Theorem \ref{cor: 2} is not constructive and does not tell whether such policies are realistic in the context of homeless services delivery. Here we test whether the observed assignment as reported in the administrative records is subject to contradictory fairness assessments depending on the choice of the fairness metric. 

Figure~\ref{fig:trade-off-obs} (Panel a)) plots the difference in improvement $\Delta I$ and the negative of difference in regret $-\Delta R$, so that positive values indicate that the policy favors group $S=1$, while negative values mean the policy favors group $S=0$. According to the \improvement metric, the observed assignment favors households without children, while according to \regret, it favors households with children: $\Delta I$ is equal to $-0.013$, while $-\Delta R$ is equal to $0.016$. A similar ambiguity emerges for households with and without disability. Moreover, choosing \improvement over \regret flips the conclusion on whether the observed assignment is unfair to Black males relative to White males: Black males derive higher utility gains according to \improvement ($\Delta I = -0.02$) but lower utility gains according to regret ($-\Delta R = 0.009$). 
The results provide empirical evidence that policies that lead to contradictory fairness assessment in Theorem \ref{cor: 2} are not just theoretical oddities, but do occur in real world applications. Although we do not prove a counterpart of 
Theorem~\ref{cor: 2} for \equitability versus \gainovermin, we find empirically that similar trade-offs do, in fact, occur (Figure \ref{fig:trade-off-obs}, Panel b)). Moreover, in Figure~\ref{fig:trade-off-obs}, we find one pairwise comparison, youth with a disability versus youth without a disability, for which the observed policy satisfies \improvement fairness. This instance of \improvement fairness allows us to test whether Theorem~\ref{thm: trade-off} holds here. We find that the policy does not satisfy \regret fairness, which is consistent with the heterogeneity in $\Delta U$ found in section 5.3 between youth with a disability versus youth without a disability. 

\begin{figure}
    \centering
    \includegraphics[width=1.0\textwidth]{./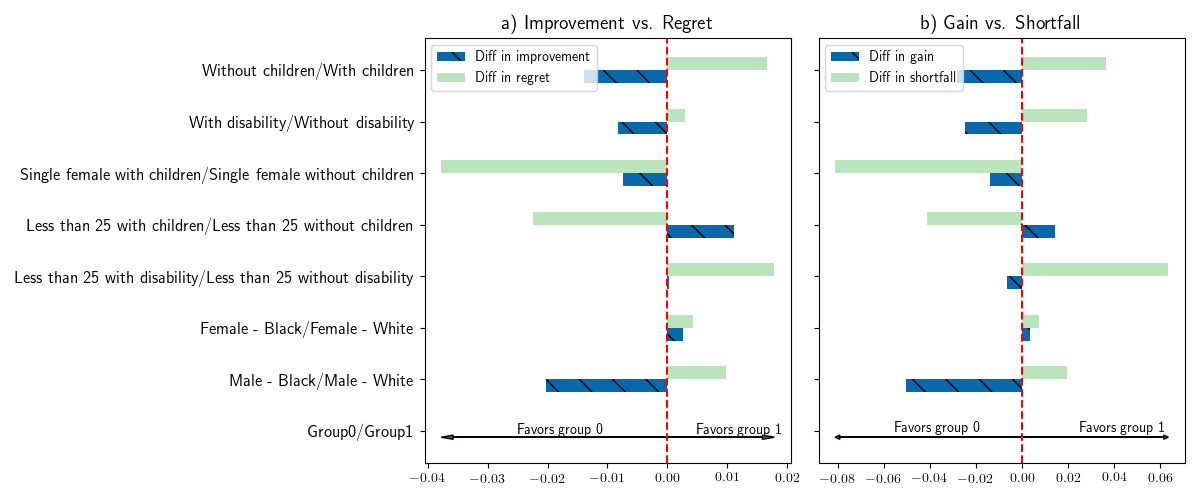}
    \caption{Fairness trade-off in the observed assignment of homeless services. This compares which demographic group is favored by the assignment depending on the fairness metric. Trade-offs occurs when \improvement favors one group and \regret the other one (left panel) or when \equitability favors one group and \gainovermin the other one (right panel).}
    \label{fig:trade-off-obs}
\end{figure}
\section{Conclusion}
How do we judge whether an approach to allocation of scarce societal resources is fair for different sociodemographic groups of public concern? The problem lies at the intersection of recent work in fair machine learning and a long history of work from economics, social choice, and algorithmic game theory on fair division. It also brings into question concerns of local justice~\cite{elster1992local}, which studies how individuals are prioritized in the allocation of scarce resources by local institutions. The key point we make in this paper is that \emph{baselines matter when we measure outcomes for different groups}. The exact same allocation may favor one group over another when assessed against the baseline intervention of doing nothing, but the group it favors could invert when measured against the baseline of giving each group the best intervention it could get in a scenario with no resource constraints. The social objective being optimized also can drive fairness results -- for example, utilitarian allocations will typically favor groups with higher variance in utilities across different types of services, even if the means are the same. 

Our results are more than theoretical. We show that the pattern arises in homeless service delivery, where outcomes vary by and within sociodemographic groups. For instance, returns to homelessness vary by service allocation more for households without children compared to families with children. Naive policy applications that fail to consider baseline variation may negatively impact some groups. Aiming to reduce overall homelessness, for example, by prioritizing households without children for intensive services disproportionately excludes households with children from receiving their best service, whereas an alternative policy that matches households with children to their best service fails to reduce overall homelessness. The data illustrate similar fairness tradeoffs across intersecting sociodemographic groups, including disability status, gender, age, and race.  Failing to consider carefully the underlying distributions and metrics for success threatens counterproductive policy initiatives. Current national advocacy to reduce veteran and chronic homelessness to zero ask communities to shift resources in ways that may undermine other goals~\cite{builtforzero}. Moreover, federal and local policies simultaneously strive for system efficiency and equity, which prove antithetical in many contexts~\cite{fowler2019solving}. Our findings raise serious questions for institutions when designing homeless policies and social policy more generally.

\bibliographystyle{unsrt}  
\bibliography{sample-base}

\end{document}